\documentclass[runningheads]{llncs}

\UseRawInputEncoding
\usepackage[all]{xy}
\usepackage{amssymb}
\usepackage{stmaryrd}
\usepackage{amsmath} 
\usepackage{graphicx}
\usepackage{listings}
\usepackage{enumerate}
\usepackage{url}
\urldef\myurl\url{foo%.com}
\usepackage{hyperref}

\newcommand{\rat}{\mathbb{Q}}
\newcommand{\Int}{\mathbb{Z}}
\newcommand{\real}{\mathbb{R}}

\begin{document}

\title{A Complete Finite Axiomatisation of the Equational Theory of Common Meadows}

%
%
\

\author{Jan A Bergstra \and John V Tucker}
\titlerunning{Common Meadows}
\author{Jan A Bergstra\inst{1} \and John V Tucker\inst{2}}
\institute{{Informatics Institute, University of Amsterdam, Science Park 900, 1098 XH, Amsterdam, 
The Netherlands\\
j.a.bergstra@uva.nl} \and {Department of Computer Science, Swansea University, Bay Campus, Fabian Way, 
Swansea, SA1 8EN, United Kingdom\\j.v.tucker@swansea.ac.uk} }

\maketitle

\begin{abstract}
We analyse abstract data types that model numerical structures with a concept of error. Specifically, we focus on arithmetic data types that contain an error value $\bot$ whose main purpose is to always return a 
value for division. To rings and fields, we add a division operator $x/y$ and study a class of algebras called \textit{common meadows} wherein $x/0 = \bot$. 
The set of equations true in all common meadows is named the \textit{equational theory of common meadows}. We give a finite equational axiomatisation of the equational theory of common meadows and prove that it is complete and that the equational theory is decidable.
\end{abstract}

\begin{keywords}
arithmetical data type, division by zero, error value, common meadow, fracterm, fracterm calculus, equational theory.
\end{keywords}

\section{Introduction}\label{introduction}

Arithmetical structures have deep mathematical theories exploring their abstract axiomatisations, concrete representations, comparisons by homomorphisms, use in constructions, methods of equation solving, etc. For example, the naturals form commutative semirings, the integers form commutative rings, and the rationals, reals and complex numbers form fields. However, for the purposes of computing, their classical algebraic theories have some shortcomings.  Computing with arithmetical structures requires us to model and make abstract data types with extra algebraic properties that arise from the semantics of algorithms and programs. 

In designing a practical computation, we wish to avoid that the application of an operator does nothing -- e.g., fails to return either a value or a message that no value exists. Partial functions are common and there are general ways to achieve a response, such as recognising inputs that have no outputs by pre-conditions, or introducing special values for operators that may, or may not, name the failure; in each case an error message can be generated. 

Special values make the operators total functions on all inputs. Now, arithmetical structures in computing have been given various special elements that indicate special behaviour; the most obvious examples are error values, such as a pocket calculator displays when trying to compute $x/0$ or when having an overflow. Floating point arithmetics employ several values, such as infinities $+\infty, -\infty$ and `not a number' $\mathsf{NaN}$. Surprisingly, not much is known about the algebraic theories of these arithmetical structures with their special values that have become standard for computer arithmetics. What has been known, at least since von Neumann and Goldstine's 1947 analysis of numerics, is that computer arithmetics do not satisfy the beautiful axioms of classical algebra \cite{NeumannGoldstine1947,Tucker2022-FACTS}.


\subsection{Common meadows}
In \cite{BergstraT2007}, we began to investigate semantic aspects of computer arithmetic using the theory of abstract data types. Using the equational methods characteristic of the theory, we have studied several semantic options for undefined operators and overflows, often focussing on data types of rational numbers (we sketch some of this programme later, in section \ref{background}).

In this paper, we consider the class of arithmetical data types called \textit{common meadows}, which have the general form
$$(F \cup \{ \bot \}  \ | \ 0, 1, \bot,  x+y, -x, x \cdot y, x/y)$$
where $F$ is a field and $\bot$ is an element that behaves like an error value. Following~\cite{BergstraT2007,BergstraHT2009}, we use the term \textit{meadow} for any field equipped with an explicit operator for
division. The idea of a common meadow was introduced in~\cite{BergstraP2015}.  The class of all common meadows is denoted $\mathsf{CM}$. 

Common meadows are built from fields by adding error and division, as follows.  Given any field $F$, we extend its domain with a new element $\bot$ which is \textit{absorptive}, which means for all $x \in F$, 
$$x + \bot = \bot, x \cdot \bot = \bot, \textrm{and}  -\bot  = \bot.$$
This gives us the enlarged field-like structure $\mathsf{Enl}_\bot(F)$, using the general methods of~\cite{BergstraT2021b}. The addition of $\bot$ disturbs the classical algebra of fields as standard properties can fail, e.g.,
$$ x - x = 0 \ \textrm{fails because} \  \bot - \bot = \bot \ \textrm{and} \ x \cdot 0 = 0 \ \textrm{fails because} \  \bot \cdot 0 = \bot.$$
We will explore the effect of $\bot$ and show that, surprisingly, many familiar laws can be preserved or rescued.
 
With $\bot$ installed, we can extend  $\mathsf{Enl}_\bot(F)$ with a total division function  $\frac{x}{y}$, also written $x/y$, and defined by:  

$\frac{x}{y} = \bot$  if $y=0$, $y = \bot$ or $x=\bot$; otherwise,

$\frac{x}{y}  = x \cdot y^\prime$ where $y^\prime \in F$ is the unique element for which $y \cdot y^\prime= 1$ in $F$.   

\noindent This algebra is denoted $\mathsf{Enl}_{\bot}(F(\_/\_))$ and is a common meadow.

With these constructions introduced, we can now turn to the main theorem of the paper, for which we need to be very precise about the syntax of rings, fields and common meadows. The syntax is determined by choosing signatures that contain names for the constants and operations. We need several: $\Sigma_r$ for rings and fields; $\Sigma_{r, \bot}$ for rings and fields with $\bot$; $\Sigma_{m}$ for meadows; and $\Sigma_{cm}$ for common meadows. We will use terms and equations over these signatures.


\subsection{Equational theory of common meadows}

The importance of the field of rational numbers for computing influences our use of rings and fields in developing arithmetical data types. Earlier, we have sought finite axiomatisations to capture the algebraic laws of common meadows, taking the axioms of rings and fields as an inspiration and guide. This has led, in \cite{BergstraT2022b}, to a particular equational axiomatisation $E_{\mathsf{ftc-cm}}$ that has a clear relation with rings; it and its equivalent equational axiomatisations are the main object of study in this paper.

In addition to focussing on division as a total function, we highlight the idea of a fraction -- the primary representation of rationals in practice -- adapting it to the abstract setting of meadows. Although fractions are not well-defined notions, the idea can be made perfectly precise using the syntax of the signature containing division.  

\begin{definition}
A \textit{fracterm} is a term over the meadow signature $\Sigma_{m}$ whose leading function symbol is division. Since the equations of $E_{\mathsf{ftc-cm}}$ highlight fracterms, we call $E_{\mathsf{ftc-cm}}$, equipped with the standard rules for equational deduction, a \textit{fracterm calculus}.\footnote{Fracterms were introduced in~\cite{BergstraP2016}, and a full motivation for the use of this syntax and terminology is given in~\cite{Bergstra2020}.}
\end{definition}

\begin{definition}
The {\em equational theory of common meadows} is the set 
$$Eqn(\mathsf{CM}) =  \{ e \ | \  \forall A \in  \mathsf{CM}. A \models e \} $$ 
of all equations  over $\Sigma_{cm}$ that are true in all common meadows.
\end{definition}

The objective of the paper is to develop enough theory to prove the following new result (Theorem~\ref{MainThm} below).\\

\noindent {\bf Theorem}. \emph{The finite equational axiomatisation $E_{\mathsf{ftc-cm}}$, equipped with equational logic, is sound for the class $\mathsf{CM}$ of all common meadows, and complete for the equational theory $Eqn(\mathsf{CM})$ of common meadows. Thus, for any equation $e$ over $\Sigma_{cm}$,
$$E_{\mathsf{ftc-cm}} \vdash e \ \textrm{if, and only if,} \ e \in Eqn(\mathsf{CM}).$$}
\noindent {\bf Corollary}. \emph {The equational theory for common meadows is algorithmically decidable.}
\\
\\
So, in the language of logic, the equational theory of common meadows is finitely based and decidable.

The class of \textit{all} fields is classically definable by finitely many first order axioms; but it is not definable by any set of equations or conditional equations as they do not form a variety in the sense of Birkhoff's Theorem, or a quasivariety in the sense of Mal'tsev's Theorem (as they are not closed under products) \cite{Mal'tsev1973,MeinkeTucker92}. The same is true of the class $\mathsf{CM}$ of all common meadows. The fact about fields is the classic illustration of consequences of Birkhoff's remarkable foundational analysis of universal algebras of 1935 \cite{Birkhoff1935}. 

Equations, and conditional equations, are the preferred forms of axioms for data types, especially as they have good term rewriting properties \cite{BergstraT1995}; they are a basic component of many specification and verification tools. Seeking equational specifications of arithmetical data types is a technical programme for which completeness is something of an aspiration. Common meadows have emerged as a mathematically attractive and tractable semantics for specifying and reasoning about computer arithmetic; the completeness result confirms that $E_{\mathsf{ftc-cm}}$ perfectly characterises reasoning about common meadows with equations.

Our paper improves on earlier axiomatisations and on a partial completeness result for common meadows given in~\cite{BergstraP2015}, based on fields with characteristic $0$. 

Complementing our theorem here is the fact, proved in \cite{BergstraT2022b}, that our axiomatisation $E_{\mathsf{ftc-cm}}$ does \textit{not} prove all conditional equations  even for characteristic $0$:\\

\noindent {\bf Question}. Does the conditional equational theory of common meadows have a sound and complete finite conditional equation axiomatization?


\subsection{Structure of the paper}
We begin with preliminaries. First, we recall basic ideas about abstract data types in section \ref{preliminaries_data_types} that we will use and, indeed, situates our research programme. Secondly, in section \ref{common_meadows}, we summarise concepts about rings, fields and common meadows that we use and form the foundation of our algebraic approach to computer arithmetics. Polynomials play a central role in all arithmetical structures and so transitions between standard polynomials and syntactic polynomials for rings, fields and common meadows are established in section \ref{polynomials}.  In section \ref{axioms_for _common_meadows} we use the ideas and results we have accumulated to prove the theorems. Finally, in section \ref{concluding_remarks}, we explicate and situate the results in logic, reflect on our programme, and discuss some open problems that arise naturally.

The  results of this paper  are relate to abstract data type theory, computer arithmetic, algebra and logic. We have tried to make the paper sufficiently self-contained to serve the needs of these audiences.  Our preliminary material is designed to recall key ideas and results, and to settle notation, and include many pointers to the literature for further explanations. We do assume that the reader has some knowledge of equational specifications of data types, rings and fields, and first order logics.

We thank two referees for their questions, comments and suggestions, which have enabled us to improve the paper. We also thank Alban Ponse for technical observations on the axioms, and Markus Roggenbach for information on verification.


\section{Preliminaries on abstract data types}\label{preliminaries_data_types}

The theory of abstract data types starts from four basic concepts as follows. An implementation of a data type is modelled by a many-sorted algebra $A$ of signature $\Sigma$.  A  signature $\Sigma$ is an interface to some (model of an) implementation of the data type, and the constants and operations declared in $\Sigma$ provide the only means of access to the data for the programmer. Axiomatisations of the operations in a signature define a range of implementations and provide the only means for the programmer to reason about the data. Two implementations of an interface are equivalent if, and only if, their algebraic models are isomorphic.  The theory of arithmetic data types we are developing here is shaped by these and the following general concepts.


\subsection{Terms and equations}\label{data_types}

That signatures model interfaces establishes an essential role for the syntax of terms and equations in the theory abstract data types. 

Let  $\Sigma$ be any signature. Let $X$ be any countable set of variables. Let $T(\Sigma)$ and  $T(\Sigma, X)$ 
be the algebras of all closed or ground terms over $\Sigma$, and open terms with variables in $X$, respectively. Given a $\Sigma$-algebra $A$, and a valuation $\sigma$ for variables in a term $t \in T(\Sigma, X)$, the result of evaluating $t$ in $A$ using $\sigma$ is denoted $ \llbracket   t  \rrbracket_{\sigma}$.

\begin{definition}
An {\em equation} over the set $X$ of variables is a formula of the form 
$$e \equiv t(x_1, \ldots , x_k) = t'(x_1, \ldots , x_k)$$
where $t(x_1, \ldots , x_k), t'(x_1, \ldots , x_k)$ are terms over $\Sigma$ with variables from the list $x_1, \ldots , x_k \in X$;  note the terms $t$ and $t'$ need not have the same variables. Let $Eqn(\Sigma,X)$ to be the set of all equations over $\Sigma$ with variables taken from $X$.
\end{definition}

\begin{definition}
An equation $e \equiv t = t' \in Eqn(\Sigma,X)$ is {\em valid} in the $\Sigma$ algebra $A$, written $A \models e$,  if for all valuations $\sigma$ of variables of $e$, $ \llbracket   t  \rrbracket_{\sigma}  =  \llbracket   t'  \rrbracket_{\sigma}$. 
The equation $e$ is {\em valid} in a class $\mathsf{K}$ of $\Sigma$-algebras, written $\mathsf{K} \models e$, if it is valid in every algebra in $\mathsf{K}$.
\end{definition}

\begin{definition}
Let $E \subset Eqn(\Sigma,X)$ be a set of equations over $\Sigma$. Then $E$ together with the standard  rules of equational deduction forms an {\em equational calculus}. We write $E \vdash e$ if equation $e \in  Eqn(\Sigma,X)$ can be deduced from $E$.
\end{definition}

The following is a basic fact about reasoning:

\begin{lemma}\label{enumerability}
Let $E$ be a computably enumerable set of equations. Then 
$\{ e | E \vdash e\}$
is computably enumerable.
\end{lemma}

\begin{definition}\label{sound_complete}
Let $\mathsf{K}$ be a class of $\Sigma$-algebras. A set $E$ of equations is {\em sound} w.r.t. equational logic for $\mathsf{K}$ if for all equations $e$, if $E \vdash e$ then $\mathsf{K} \models e$. Conversely, the set $E$ of equations is {\em complete} w.r.t. equational logic for $\mathsf{K}$ if for all equations $e$,  if $\mathsf{K} \models e$ then $E \vdash e$. 
\end{definition}

The search for an axiomatisation is a method for discovering the essential properties of some class $\mathsf{K}$ of structures of interest.  In trying to axiomatise a given class $\mathsf{K}$ of structures by a set of equations $E$, soundness is a necessary property, of course: the equations and formulae logically derivable from them must be true in \textit{all} the models of the axioms in $E$. However, in the class of \textit{all} models of the axioms $E$ `non-standard' structures appear that are very different from the structures of $\mathsf{K}$ that motivated $E$. The special case of the class $\mathsf{K}$ being an isomorphism type, i.e., $\mathsf{K}$ consisting of all structures that are isomorphic to a single structure, is central in computing and is at the heart of abstract data type theory.\footnote{One consequence of Skolem is: no first-order theory with an infinite model can have a unique model up to isomorphism.}  For any given class $\mathsf{K}$ of structures completeness is more complicated and, in fact, can be rare though not unknown. We return to this important topic in section \ref{concluding_remarks}.

\begin{definition}\label{equational_theory}
Let $\mathsf{K}$ be a class of $\Sigma$-algebras. The set
$$Eqn(\mathsf{K}) =  \{ e \ | \  \forall A \in  \mathsf{K} . A \models e \}$$
of equations is called the {\em equational theory} of $\mathsf{K}$.
\end{definition}


\subsection{Data types and their enlargements by $\bot$}\label{enlargements}

The properties of interest to abstract data types are isomorphism invariants -- typical examples are properties that are definable by first order formulae and forms of computability. This means that if a property is true of \textit{any} data type $A$, and is an isomorphism invariant, then the property will be true of its abstract data type. For more of the general theory of abstract data types see \cite{EhrichWL1997,EhrigMahr1985,Wechler1992,MeinkeTucker92}. 

Our algebras will be single-sorted and have a non-empty carrier so we will use a simple notation for data types. For instance, $$(A\ |\ c_1, \ldots, c_k, f_1,\ldots, f_l)$$ denotes a data type with domain $A$ and constants $c_1$, ...,$c_k$ from $A$, and functions $f_1$,..,$f_k$, where it is assumed that arities for the functions on $A$ are known from the context. 

\begin{definition}\label{total/partial}
An algebra $A$ is {\em total algebra} if all its operations are total functions. An algebra $A$ is {\em partial algebra} if one or more of its operations are partial functions.
\end{definition}

\begin{definition}\label{adt}
A $\Sigma$-algebra $A$ is  $\Sigma$-{\em minimal} if it is generated by the constants and operations named in its signature $\Sigma$. A {\em data type} is a  $\Sigma$-minimal algebra. An {\em abstract data type} is an isomorphism class of a data type.
\end{definition}

\begin{definition}\label{enrichment}
An $\Sigma$-algebra $A$ can be {\em expanded} by adding: (i) new constant and operation symbols to its signature to create a new signature $\Sigma_{+}$ and (ii) interpretations of the constants and operation symbols to the algebra to create a new $\Sigma_{+}$-algebra $A_{+}$.  An $\Sigma$-algebra $A$ can be \textit{extended} by adding new elements to its carriers, and defining its operators on them, to make a new algebra $A^+$, which may use signature $\Sigma$ or may have new constant symbols added to $\Sigma$ to form a new signature $\Sigma^+$ for $A^+$. Combining expansions and extensions in some order 
constitutes what we call an \textit{enlargement} of an algebra. 
\end{definition}

Consider the following general method of enlarging an algebra with $\bot$. 

\begin{definition}\label{enlarging_with_bot}
Consider the algebra 
$$(A\ |\ c_1, \ldots, c_k, f_1,\ldots, f_l)$$
of signature $\Sigma$. Suppose $\bot \notin A$ and let
$$Enl_\bot(A) = (A \cup \{ \bot \}\ |\ c_1, \ldots, c_k, \bot, f_1,\ldots, f_l)$$
\noindent wherein $\bot$ is 

(i) {\em absortive}, i.e., if $\bot$ is an argument to an operation $f$ then the result is $\bot$; and 

(ii) {\em totalising},  i.e., if any operation $f$ is undefined in $A$ then it returns $\bot$ in $Enl_\bot(A)$.

\noindent Let $\Sigma_\bot = \Sigma \cup \{ \bot \}$ be the signature of $Enl_\bot(A)$. For simplicity, we can call such a construction a $\bot$-{\em enlargement} and the result a $\bot$-{\em algebra}. Such an algebra may also be called an \textit{error algebra}. 
\end{definition}

If the algebra $A$ is total then $f$ returns $\bot$ if, and only if, one of its arguments is $\bot$.

We can adapt some equational axioms true of $A$ to accommodate $\bot$ by using this idea:
\begin{definition}
An equation $t = t'$ is a {\em balanced equation} if the terms $t$ and $t'$ have the same variables.
\end{definition}
\noindent Their key property is this: 

\begin{lemma}\label{balance_lemma}
Let $A$ be a $\Sigma$ algebra and  let $t = t'$ be a balanced equation. Then,
$$   A \models t = t' \ \textrm{if, and only if,} \   Enl_\bot(A)  \models t = t'.$$
\end{lemma}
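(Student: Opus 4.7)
The plan is to argue by structural induction on terms, but really the core content is a case split on the valuation. Fix any valuation $\sigma$ from variables into the carrier $A \cup \{\bot\}$ of $Enl_\bot(A)$. I would distinguish whether $\sigma$ sends some variable appearing in $t$ (equivalently, in $t'$, by the balanced hypothesis) to $\bot$, or not.

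In the first case, where $\sigma(x) = \bot$ for some variable $x$ occurring in $t$, absorptiveness of $\bot$ propagates up through every operation symbol, so a straightforward induction on the structure of $t$ shows $\llbracket t \rrbracket_\sigma = \bot$; by the balanced equation assumption, $x$ also occurs in $t'$, so the same induction gives $\llbracket t' \rrbracket_\sigma = \bot$, and both sides agree. If instead some variable not occurring in $t$ or $t'$ is sent to $\bot$, it is irrelevant and we are reduced to the second case. In the second case, $\sigma$ restricted to the variables of the equation takes values entirely in $A$; since on $A$-tuples the operations of $Enl_\bot(A)$ coincide with those of $A$ (the only changes made by the enlargement are to add $\bot$ as an absorptive value and to assign $\bot$ to previously undefined operation applications, which do not arise here), a routine induction shows $\llbracket t \rrbracket_\sigma^{Enl_\bot(A)} = \llbracket t \rrbracket_\sigma^{A}$ and similarly for $t'$. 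This handles both directions simultaneously: soundness on $Enl_\bot(A)$ restricts to soundness on $A$, and soundness on $A$ lifts to $Enl_\bot(A)$ once the $\bot$-using valuations are dealt with by absorption.

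Concretely, for the forward implication, assume $A \models t = t'$, take any $\sigma$ into $Enl_\bot(A)$, and use the case split above to show $\llbracket t \rrbracket_\sigma = \llbracket t' \rrbracket_\sigma$: either both equal $\bot$ by absorption, or both equal the corresponding $A$-evaluation under the restricted valuation, which are equal by hypothesis. For the reverse implication, assume $Enl_\bot(A) \models t = t'$ and take any $\sigma$ into $A$; then $\sigma$ is also a valuation into $Enl_\bot(A)$ that avoids $\bot$, so the evaluations are in $A$ and equal to the $A$-evaluations, and the hypothesis gives the result.

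The only subtlety, and the step that deserves the most care, is the claim that on $A$-valued arguments the operations of $Enl_\bot(A)$ agree with those of $A$, i.e.\ that no $\bot$ can be introduced by a subterm evaluation when the leaves are in $A$. If $A$ is total this is immediate from the construction; if $A$ is partial, the totalisation clause may produce a $\bot$ from $A$-arguments, but then the balanced-equation/absorption argument still delivers equal values on both sides. So the balance hypothesis is exactly what makes the case analysis close, and I would highlight it as the crux of the argument.
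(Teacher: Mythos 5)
The paper states this lemma without proof, so there is no official argument to compare against; your case split on whether the valuation sends a variable of the equation to $\bot$ (absorption forces both sides to $\bot$) versus a valuation into $A$ (where the enlarged operations restrict to those of $A$) is the standard argument the authors evidently had in mind, and it is correct for total $\Sigma$-algebras, which is the setting in which the lemma is actually used (the balanced ring axioms of Table~1). One caution on your closing remark: for \emph{partial} $A$ the balance hypothesis controls only which variables occur, not which operation applications are undefined, so a totalisation-induced $\bot$ can appear on one side and not the other (e.g.\ $t \equiv x$ versus $t' \equiv x + 0\cdot g(x)$ with $g$ partial); the claimed equivalence then depends on reading $A \models t = t'$ as strong (Kleene) equality and is false under weak equality, so that sentence overclaims --- but it is a side remark and does not affect the proof of the lemma as stated and used.
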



\section{Preliminaries on arithmetic structures}\label{common_meadows}

In the arguments that follow, we will move between the algebra of rings, fields and common meadows.

\subsection{Rings and fields and common meadows}
We start from the theory of commutative rings and fields. 

\begin{definition}\label{ring}
A {\em commutative ring with 1} is an algebra $R$ of the form
$$(R \ | \ 0, 1, x+y, -x, x \cdot y)$$
satisfying the axioms of Table \ref{commutative_ring}. All our rings will be commutative with 1.
\end{definition}

\begin{definition}\label{field}
A {\em field} is a commutative ring $F$ with 1 in which $0 \neq 1$ and for all $x \in F$,
$$ x \neq 0 \ \textrm{implies} \ \exists y [x \cdot y = 1].$$
\end{definition}

Let $\Sigma_{r}$ be a signature for rings and fields.  Note rings and fields have the same three operations. 

Let $\Int$ be a ring of integers and let $\rat$ be a field of rational numbers containing the subring $\Int$.

\begin{definition}
In a ring $R$, for each $x$, if there is a $y$ such that $x \cdot y = 1$ then $x$ is called an {\em invertible element} and $y$ is called the {\em inverse} of $x$. 
\end{definition}

In every ring, the additive inverse $0$ is not invertible as we can derive $0.x = 0$ from Table \ref{commutative_ring}. Thus, from Definition \ref{field}, a field is a ring in which all elements are invertible, except $0$.

In many rings, the inverse $y$ of an invertible element $x$ is unique, and so an explicit operator, with a familiar notation $^{-1}$, can be introduced for calculating $y = x^{-1}$. The operator $x^{-1}$ is partial as it is only defined for invertible elements. Thus, a derived division operator $x / y = x \cdot y^{-1}$ is also partial for $x = 0$ on a field.   
 
\begin{table}
\centering
\hrule
\begin{align}
	(x+y)+z 			&= x + (y + z)\\
	x+y     			&= y+x\\
	x+0     			&= x\\
	x + (-x) &= 0 \\
	x \cdot (y \cdot z) 	&= (x \cdot y) \cdot z\\
	x \cdot y 			&= y \cdot x\\
	1 \cdot x 			&= x\\
	x \cdot (y+ z) 		&= (x \cdot y) + (x \cdot z) 
\end{align}
\hrule
\medskip
\caption{$E_{\mathsf{cr}}$: equational axioms for commutative rings with 1}
\label{commutative_ring}
\end{table}

It is perhaps worth noting that the definition of a field as a special type of ring -- thus having only ring operations and no inverse or division -- was well established from the early days of abstract algebra. The axiomatic approach was to focus on equation solving such as $ax = b$ in rings  \cite{vanderWaerden1970}, and on the invertible elements in rings \cite{Chevalley1956,Lang1965,BirkhoffMacLane1965}. These approaches are also to be found in universal algebra and model theory \cite{Mal'tsev1973,Hodges1993}.  The inverse operator was not used for defining fields, not least because of partiality, though there are examples in some student textbooks and lecture notes, e.g., \cite{Stewart1972}. As the algebraic consequences of the act of introducing division into a ring is an object of our theory of meadows, our definition of field follows strictly the classical tradition.

\begin{definition}
By applying the $\bot$-enlargement of Definition \ref{enlarging_with_bot}, we add $\bot$ to a ring $R$ to build the $\bot$-algebra 
$$Enl_\bot(R) = (A \cup \{ \bot \}\ |\  0, 1, \bot, x+y, -x, x \cdot y)$$
with signature $\Sigma_{r,\bot}$. The same construction applied to a field $F$ yields $Enl_\bot(F)$. 
\end{definition}

The point of adding $\bot$ is to manage the partiality of division. 


\subsection{Equation solving and algebraically closed fields}\label{algebraically_closed _fields}

We will call upon some basic theory of rings and fields in what follows. In particular, the classical theory of polynomials plays an important role in our arguments. There are many classic \cite{vanderWaerden1970,Chevalley1956,Lang1965,BirkhoffMacLane1965} and contemporary textbooks on rings and fields to which reference can be made for what we need. Here we recall a few important notions to do with the algebra of solving polynomial equations.

\begin{definition}
Let $F[X]$ be the set of polynomials with variable $X$. An element $a \in F$ is a {\em root} of a polynomial $p\in F[X]$ if $p(a) = 0$ in F. 
\end{definition}

Roots are key to the factorisation of polynomials: if $a$ is a root of $p$ then $p$ is divisible by $(X - a)$. 

\begin{definition}
A polynomial $p \in F[X]$ is {\em irreducible} over F if it cannot be factored into the product of two non-constant polynomials with coefficients in F.
\end{definition}

For many fields not every polynomial has a root. Most notably, for $F = \real$, a field of real numbers, $p(X) = X^2 +1$ does not have a root in $\real$. For this situation the extension to a field of complex numbers $\mathbb{C}$ was created wherein every polynomial over $\mathbb{C}$ -- and thus over $\real$ -- had a solution and the number of solutions corresponded with the degree of the polynomial -- a result finally proved by Carl Friedrich Gauss in 1799 and celebrated as the `Fundamental Theorem of Algebra'.

Basic field theory generalises the solution of polynomial equations.

\begin{definition}\label{def:algebraic_closure}
A field $F$ is {\em algebraically closed} if every polynomial  $p\in F[X]$ has a root in $F$.
\end{definition} 

\begin{theorem}
For each field $F$, there exists a field K containing F that is algebraically closed. Furthermore, there is a smallest such field $\overline{F}$, called the {\em algebraically closure} of $F$, that is unique as an extension of $F$ up to isomorphism.
\end{theorem} 

\begin{definition}
A field $F$ is {\em prime} if it contains no subfields. 
\end{definition}

The finite prime fields $F_{p}$ are isomorphic to $\Int_{p}$, the modulo $p$ arithmetics for $p$ a prime; the infinite prime fields are isomorphic to the rationals $\rat$. Every field contains a subfield that is prime and so isomorphic to either $\Int_{p}$ or $\rat$. 

In sections \ref{polynomials} and \ref{axioms_for _common_meadows}, we will use the algebraic closures $\overline{F_{p}}$  and $\overline{\rat}$ of prime fields.


\subsection{Meadows and common meadows}

To fields we add a division operator to make a meadow.  

\begin{definition}\label{Mdef}
A {\em meadow} is a partial algebra $F( \_/\_)$
obtained as an expansion of a field with a
division function $\_/\_$ that works as usual on non-zero elements of the domain of $F$. Let $\Sigma_{m} = \Sigma_{r} \cup \{  \_/\_\}$.
\end{definition} 

To totalise division, we add $\bot$ to a meadow $F( \_/\_)$ by applying the enlargement of Definition \ref{enlarging_with_bot}:

\begin{definition}\label{CMdef}
A {\em common meadow} is a total algebra
$$Enl_\bot(F( \_/\_)) = (F \cup \{ \bot \}\ |\  0, 1, \bot, x+y, -x, x \cdot y, x/y)$$
with signature $\Sigma_{cm} = \Sigma_{m, \bot}$.
\end{definition} 

Thus, we have a field $F$ equipped with a division function $\_/\_$ that has been made total by having $x/0  = \bot$ for all $x$, including $\bot$.\footnote{Equivalent designs for meadows and common meadows can be based on inverse as a primitive, an approach that was taken in~\cite{BergstraP2015}.} 

Recall that to qualify as a data type, an algebra must be minimal, i.e., generated by its constants and operations (Definition \ref{adt}). Now, if $F_p$ is a finite prime field (isomorphic to modulo $p$ arithmetic on $\{ 0, 1, \ldots ,p \}$, for $p$ a prime number) then $\mathsf{Enl}_\bot(F_p)$ is minimal.  For all other fields $F$ -- in particular, the rationals -- the algebra is non-minimal and is not a data type for that reason. Division is needed to make the classical field of rational numbers a data type: 

\begin{lemma}
The common meadow $Enl_\bot(\rat( \_/\_))$ of rationals is $\Sigma_{cm}$-minimal and hence qualifies as a data type.
\end{lemma}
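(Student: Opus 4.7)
The plan is to exhibit, for every element $a$ of the carrier $\rat \cup \{\bot\}$, a closed term over $\Sigma_{cm} = \{0, 1, \bot, +, -, \cdot, /\}$ that evaluates to $a$. Once this is done, minimality follows from Definition~\ref{adt}, and the algebra qualifies as a data type.

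First, $\bot$ is a constant symbol of $\Sigma_{cm}$, so the closed term $\bot$ trivially witnesses that $\bot$ is generated. It remains to show that every rational is the value of some closed $\Sigma_{cm}$-term. I would first handle the natural numbers by induction: define closed $\Sigma_r$-terms $\overline{n}$ by $\overline{0} \equiv 0$ and $\overline{n+1} \equiv \overline{n} + 1$. A straightforward induction on $n$ shows $\llbracket \overline{n} \rrbracket = n$ in $Enl_\bot(\rat(\_/\_))$. Next, for a negative integer $-n$ with $n > 0$, the closed term $-\overline{n}$ evaluates to $-n$. Hence every integer $p \in \Int$ is named by a closed $\Sigma_r$-term, which I will denote $\overline{p}$.

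Finally, any rational $r \in \rat$ can be written as $r = p/q$ with $p \in \Int$ and $q \in \nat$ with $q > 0$. Consider the closed $\Sigma_{cm}$-term $\overline{p} / \overline{q}$. Since $\llbracket \overline{q} \rrbracket = q \neq 0$ and $\llbracket \overline{p} \rrbracket = p \neq \bot$, the division clause in Definition~\ref{CMdef} does \emph{not} return $\bot$; instead it returns $p \cdot q^{-1} = r$ in the underlying field $\rat$. Thus $\llbracket \overline{p}/\overline{q} \rrbracket = r$, and every rational is generated.

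There is no substantive obstacle here — the argument is a routine three-step closure computation (naturals via iterated $+1$, integers via unary $-$, rationals via $/$). The only point requiring care is to make sure, when invoking division, that the divisor term evaluates to a nonzero, non-$\bot$ element so that the totalising clause $x/0 = \bot$ and the absorptive clause do not interfere; representing each rational in the form $p/q$ with $q > 0$ in $\nat$ handles this cleanly.
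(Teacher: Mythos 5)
Your proof is correct and is exactly the routine generation argument the paper has in mind (the paper states this lemma without proof, its surrounding text merely noting that division is what makes $\rat$ reachable from the constants). Your three-step closure computation, with the explicit check that the divisor term $\overline{q}$ evaluates to a nonzero, non-$\bot$ element so that neither the totalising nor the absorptive clause fires, supplies precisely the detail the paper omits.
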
 
\begin{proof}
The ring operations of $+, -, \cdot$ applied to constants $0, 1$ generate the integers only. But with the operation of division $\_/\_$ all rational numbers can be constructed. 
\end{proof} 

Recalling an observation made in~\cite{BergstraP2015}, we summarise the constuction:

\begin{proposition}
Every field $F$ can be enlarged to a common meadow $Enl_\bot(F( \_/\_))$ that is unique with respect to isomorphisms that fix the field $F$. 
\end{proposition}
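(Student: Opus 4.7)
The plan is to establish existence and uniqueness separately, exploiting the fact that both stages of the construction -- forming the meadow $F(\_/\_)$ and then enlarging with $\bot$ -- impose no genuine choices beyond the underlying field data.

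For existence, I would first build the meadow $F(\_/\_)$ by defining $x/y = x \cdot y^{-1}$ whenever $y \neq 0$, where $y^{-1}$ is the unique multiplicative inverse guaranteed by the field axioms; this yields a partial $\Sigma_m$-algebra uniquely determined by $F$. Next, I apply the enlargement of Definition~\ref{enlarging_with_bot} to obtain $Enl_\bot(F(\_/\_))$ with carrier $F \cup \{\bot\}$, making $\bot$ absorptive for every operation and letting it totalise the single previously undefined case, namely $x/0$. By Definition~\ref{CMdef}, the resulting $\Sigma_{cm}$-algebra is a common meadow.

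For uniqueness, I would take two common meadows $M_1$ and $M_2$ arising from $F$, with carriers $F \cup \{\bot_1\}$ and $F \cup \{\bot_2\}$ respectively, and construct the $F$-fixing bijection $\phi : M_1 \to M_2$ defined by $\phi|_F = \mathrm{id}_F$ and $\phi(\bot_1) = \bot_2$. The key observation is that the operations of any common meadow over $F$ are entirely forced: on inputs from $F$ the field operations coincide with those of $F$, and $x/y$ for $y \in F \setminus \{0\}$ must equal $x \cdot y^{-1}$ by the field axioms; on any input involving $\bot_i$ the output must be $\bot_i$ by absorptiveness; and $x/0$ must be $\bot_i$ by totalisation. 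Case analysis on whether arguments lie in $F$ or equal $\bot_1$ then shows that $\phi$ commutes with each constant and each operation, so it is a $\Sigma_{cm}$-isomorphism fixing $F$.

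The only subtlety I would flag is confirming that the requirements of absorptiveness and totalisation in Definition~\ref{enlarging_with_bot} do not overdetermine the $\bot$-behaviour in an inconsistent way. The sole overlap is $\bot/0$, which both clauses set equal to $\bot$, so no conflict arises and the construction is unambiguous. Beyond this, every step is routine, and the proposition follows.
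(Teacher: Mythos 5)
Your proof is correct and takes essentially the approach the paper intends: the paper states this proposition without proof, merely recalling it as an observation from an earlier reference, and your argument supplies exactly the routine verification being left implicit --- existence via the two-stage construction already given in the text, and uniqueness because every operation of $Enl_\bot(F(\_/\_))$ is forced by the field structure, absorptiveness, and totalisation. Your closing remark that the two clauses of the enlargement definition cannot conflict is a sensible sanity check and does not affect the argument.
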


An algebra is {\em computable} if its carrier set is decidable, equality between elements of the carrier set is decidable, and the operations of the algebra are computable.

\begin{proposition}
If $F$ is a computable field then $Enl_\bot(F( \_/\_))$ is a computable common meadow. 
\end{proposition}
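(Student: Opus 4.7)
The plan is to inherit a computable presentation of $F$, attach one fresh index for $\bot$, and then verify that every operation of the enlargement is computable and equality remains decidable. I will work with the standard Mal'tsev--Rabin notion: a computable algebra is one with a recursively enumerable set $\Omega \subseteq \nat$ of codes, a surjection $\alpha : \Omega \to A$ with decidable equality relation $\{(m,n) : \alpha(m)=\alpha(n)\}$, and computable tracking functions for the constants and operations.

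First, I would fix such a presentation $(\Omega, \alpha)$ of the field $F$, with computable trackers for $0$, $1$, $+$, $-$, and $\cdot$, and decidable equality. From these data alone I would argue that the partial inverse $y \mapsto y^{-1}$ is computable on nonzero inputs: given $n \in \Omega$ with $\alpha(n) \ne 0$, dovetail a search through $\Omega$ for some $m$ with $\alpha(n)\cdot \alpha(m) = 1$. The search terminates because $F$ is a field, and each test is decidable because equality on $\Omega$ is. This is the only place computability of $F$ is really used in a nontrivial way.

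Next, I would extend the presentation by choosing any $n_\bot \in \nat \setminus \Omega$ (shifting the numbering by $+1$ if $\Omega = \nat$) and setting $\Omega' = \Omega \cup \{n_\bot\}$ with $\alpha'(n_\bot) = \bot$ and $\alpha'\!\restriction\!\Omega = \alpha$. Equality on $\Omega'$ is decidable: first test whether either input equals $n_\bot$, otherwise defer to decidable equality on $\Omega$. Then I would define the trackers of $Enl_\bot(F(\_/\_))$ by case analysis. For $+$, $\cdot$, and unary $-$, return (the code of) $\bot$ whenever any argument is $n_\bot$, and otherwise apply the $F$-tracker; each such tracker is a composition of decidable tests and computable functions, hence computable. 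For division, return $n_\bot$ when either argument codes $\bot$ or when the denominator code $n$ satisfies $\alpha(n) = 0$ (decidable), and otherwise return the multiplication tracker applied to $m$ and the inverse code produced above. The constants $0$, $1$, $\bot$ have explicit codes in $\Omega'$.

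The main obstacle is really the status of inverses in a "computable field". If the definition one adopts already bundles computable inverse, the proof is immediate; if not, the dovetailed search I sketched is the key step and requires that $\Omega$ be recursively enumerable so its elements can be effectively listed and that $\alpha$ be onto so that the search is guaranteed to succeed. Once inverse is handled, the remainder is a routine case analysis that carries over directly from the construction in Definition~\ref{enlarging_with_bot}: absorption of $\bot$ and totalisation of division are both explicit, decidable alternatives and therefore preserve computability.
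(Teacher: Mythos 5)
Your proof is correct and takes essentially the same approach as the paper's: the paper simply observes that the extension by $\bot$ is computable and that the set of undefined arguments of division is decidable, deferring the presentation-level details to Stoltenberg-Hansen and Tucker, and those details are exactly the fresh code for $\bot$ plus the dovetailed search for inverses that you spell out. The only cosmetic remark is that the standard definition takes the code set $\Omega$ to be recursive rather than merely recursively enumerable, but your search argument works either way.
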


\begin{proof}
It is easy to see that the extension of $F$  by $\bot$ is computable. Division is partial on $F$, but its set  $ \{ (x, 0) | x \in F \}$ of undefined arguments is computable, for which the value $\bot$ for divisions can be computed. 
See, e.g.,~\cite{StoltenbergTucker1999} for methods to express this argument about fields in detail.
\end{proof}

Applying the definitions of equations in section \ref{data_types} we have: 

\begin{definition}\label{fracterm calculus}
The {\em equational theory of common meadows} is the set 
$$Eqn(\mathsf{CM}) =  \{ e \in Eqn(\Sigma_{cm})  \ | \  \forall A \in  \mathsf{CM}. A \models e \} $$ 
of all equations made of $\Sigma_{cm}$-terms that are true in all common meadows.
\end{definition}


\subsection{Polynomial sumterms}

For the next steps in preparing for the proof, we need some syntactic theory of polynomials adapted to the presence of $\bot$ in rings and fields
and, later, to working with division in common meadows.

\begin{definition}\label{sumterm_def}
A {\em sumterm} is a $\Sigma_{r}$ term $s$ with $\_+\_$ as its leading function symbol. 

A {\em pure product term} is a $\Sigma_{r}$ term $s$ containing only multiplications $\_\cdot\_$.

A {\em flat sumterm} is an arbitrarily long sum $s_1 + \ldots + s_k$ of pure product terms; note that in the presence of associativity we need not to employ brackets.
\end{definition}

Let $Eqn(\Sigma_{r})$ denote the set of all equations made from terms over $\Sigma_{r}$. Now since 
$$\Sigma_{r} \subset \Sigma_{r,\bot}  \subset \Sigma_{cm}$$ 
the ring terms and equations over $\Sigma_{r}$ are destined to play a special role in the theory of common meadows: they are the simple terms and equations over $\Sigma_{cm}$ that do not involve $\bot$ or division.

\begin{definition}
The {\em sumterm equational theory of common meadows} is the set 
$$SumEqn(\mathsf{CM}) =  \{ e \in SumEqn(\Sigma_{r})  \ | \  \forall A \in  \mathsf{CM}. A \models e \} $$ 
of all sumterm equations true in all common meadows.
\end{definition}


\subsection{Equational specifications with $\bot$}

Consider the  set $E_{\mathsf{wcr},\bot}$ of equational axioms over $\Sigma_{r,\bot}$ in Table~\ref{EwcrBot}. 

\begin{table}
\centering
\hrule
\begin{align}
	(x+y)+z 			&= x + (y + z)\\
	x+y     			&= y+x\\
	x+0     			&= x\\
	x + (-x) &= 0 \cdot x \\
	x \cdot (y \cdot z) 	&= (x \cdot y) \cdot z\\
	x \cdot y 			&= y \cdot x\\
	1 \cdot x 			&= x\\
	x \cdot (y+ z) 		&= (x \cdot y) + (x \cdot z) \\
	-(-x) 				&= x\\
	 0 \cdot (x \cdot x)	&= 0 \cdot x\\
	x + \bot 			&= \bot
\end{align}
\hrule
\medskip
\caption{$E_{\mathsf{wcr},\bot}$: equational axioms for weak commutative rings with $\bot$}
\label{EwcrBot}
\end{table}

Notice these equations are close to the equational axioms of commutative rings.
Seven of the eight equations for commutative rings in Table \ref{commutative_ring} are balanced equations (Lemma \ref{balance_lemma}) and are intact in Table \ref{EwcrBot}.  The axiom (4) of Table \ref{commutative_ring} is adjusted to the presence of $\bot$ in Table \ref{EwcrBot}:  the unbalanced equation  $x + (-x) = 0$ is replaced by the balanced $x + (-x) = 0 \cdot x$, which is valid for $x = \bot$. 
As an example of working with the equational logic of $E_{\mathsf{wcr},\bot}$, here is a derivation of an equation that we have also used as an axiom:

\begin{proposition}
$E_{\mathsf{wcr},\bot} \vdash 0  \cdot (x + y)	= 0 \cdot (x \cdot y)$.
\end{proposition}
\begin{proof}
First we notice that $\vdash 0 \cdot (x \cdot y ) = 0 \cdot (x \cdot y )+ 0 \cdot y$. To see this, deleting $\vdash$ for convenience: 
$$ 0 \cdot (x \cdot y )  = (0 \cdot x) \cdot y = ((0 \cdot x) + 0) \cdot y = (0 \cdot x) \cdot y + 0 \cdot y = 0 \cdot (x \cdot y) + 0 \cdot y.$$ 
Then we find 
$$0 \cdot (x+y) = 0 \cdot ((x+y)\cdot (x+y)) = 0 \cdot (((x \cdot x) + (x \cdot y)) + ((y \cdot x) + (y \cdot y)))$$
$$ =  (0 \cdot (x \cdot x) + 0 \cdot (x \cdot y)) +
(0 \cdot (y \cdot x) + 0 \cdot (y \cdot y)) = (0 \cdot x + 0 \cdot (x \cdot y)) + (0 \cdot (y \cdot x) + 0 \cdot y ).$$ 
By substitution of the first calculation, 
$$0 \cdot (x+y) = 0 \cdot (x \cdot y) + 0 \cdot (y \cdot x) = 0 \cdot (x \cdot y) + 0 \cdot (x \cdot y) =
(0 + 0) \cdot (x \cdot y) = (0 + 0) \cdot (x \cdot y)= 0 \cdot (x \cdot y).$$
\end{proof}

\begin{proposition}
The axioms of  $E_{\mathsf{wcr},\bot}$ are logically independent.
\end{proposition}
\begin{proof} This particular axiomatisation has been suggested, in particular using $0 \cdot (x \cdot x) = 0 \cdot x $ as an axiom rather than $0  \cdot (x \cdot y)	= 0 \cdot (x + y)$,  by Alban Ponse~\cite{Ponse2024}, who has verified independence using an automated proof system (viz. Prover9/Mace4).
\end{proof}

Axiom (19) introduces $\bot$, from which the absorption axioms for $\cdot$ and $-$ can be derived from $E_{\mathsf{wcr},\bot}$. We call these axioms for \textit{weak commutative rings}.

\begin{lemma}
The following absorption laws are derivable from the equations of Table \ref{EwcrBot}:
$$x \cdot \bot = \bot \ \textrm{and} \ - \bot = \bot.$$
\end{lemma}

\begin{proof}
These facts are instances of the following more general observation in Theorem~\ref{SumtermCFB}.
\end{proof}

An algebra satisfying the axioms for commutative rings with 1 in Table \ref{commutative_ring} will also satisfy the axioms 9--18 in Table \ref{EwcrBot}. The converse is not the case as the common meadow $Enl_\bot(\rat( \_/\_))$ will be seen to be an example.

\begin{theorem} 
\label{SumtermCFB}
The equations $E_{\mathsf{wcr},\bot}$ in Table~\ref{EwcrBot} are a finite axiomatisation that is complete for the
 
(i) equational theory for rings equipped with $\bot$;

(ii) equational theory for fields equipped with $\bot$; and 

(iii) equational theory for common meadows. 
\end{theorem}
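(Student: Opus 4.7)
The plan is to proceed in three steps: reduce the three completeness claims to a single one by showing the sumterm calculi coincide, establish a canonical normal form for $\Sigma_r$-terms derivable from $E_{\mathsf{wcr},\bot}$, and conclude. Soundness of each axiom in Table~\ref{EwcrBot} for every common meadow -- and hence for every ring or field equipped with $\bot$ -- is a routine direct check.

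Write $P(t) \in \mathbb{Z}[\vec x]$ for the polynomial induced by a $\Sigma_r$-term $t$ (its value in the free commutative ring) and $V(t)$ for the set of variables occurring in $t$. I claim that a sumterm equation $t = t'$ holds in all rings with $\bot$, all fields with $\bot$, or all common meadows iff (P) $P(t) = P(t')$ in $\mathbb{Z}[\vec x]$ and (V) $V(t) = V(t')$. Condition (V) is forced: if $z \in V(t) \setminus V(t')$, the valuation sending $z$ to $\bot$ and all other variables to $0$ makes $t$ evaluate to $\bot$ by absorption while $t'$ evaluates to an element of $F$. Condition (P) is forced by validity on non-$\bot$ valuations over the common meadow of $\mathbb{Q}$, since $\mathbb{Q}$ is an infinite integral domain, and conversely any polynomial identity in $\mathbb{Z}[\vec x]$ evaluates correctly in every commutative ring. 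Thus a single completeness proof will handle all three parts.

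The technical core is a normal form lemma: for every $\Sigma_r$-term $t$, $E_{\mathsf{wcr},\bot} \vdash t = \mathrm{NF}(t)$, where $\mathrm{NF}(t) := \underline{P(t)} + 0 \cdot m_{V(t) \setminus \mathrm{vars}(P(t))}$, with $\underline{p}$ a fixed syntactic encoding of the polynomial $p$ (say, lex-ordered monomials with explicit integer coefficients built from $1$ and $-1$), $m_W$ the lex-ordered product of the variables in $W$, and the tag summand suppressed when $W = \emptyset$. The derivation uses axioms (1), (2), (5), (6) to flatten by associativity and commutativity; (8) for full distributive expansion; (7) and (9), together with derivable identities such as $-(x \cdot y) = (-x) \cdot y$, to manage signs and identity factors; (4) to collapse cancelling pairs $u + (-u)$ into zero-tags $0 \cdot u$; and axiom (10), combined with distributivity, to merge zero-tags via $0 \cdot x + 0 \cdot y = 0 \cdot (x+y) = 0 \cdot (x \cdot y)$, so that any bag of tags consolidates into a single $0 \cdot (x_{i_1}\cdots x_{i_k})$.

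Completeness follows immediately: if $t = t'$ holds in all common meadows, then $(P(t), V(t)) = (P(t'), V(t'))$ by the reduction, so $\mathrm{NF}(t)$ and $\mathrm{NF}(t')$ are syntactically identical and $E_{\mathsf{wcr},\bot} \vdash t = \mathrm{NF}(t) = \mathrm{NF}(t') = t'$. The main obstacle is the normal form lemma itself: several classical ring identities must be re-derived in a setting where the inverse law has been weakened to $x + (-x) = 0 \cdot x$, so that error-tracking information is not discarded; and one must show that the tags produced by cancellations can always be consolidated into a single tag whose inner product captures exactly the orphaned variables. Axiom (10) is the key new ingredient for this, since without it tag residues from distinct cancelled monomials could not be uniquely merged and the normal form would fail to be canonical.
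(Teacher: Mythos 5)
Your argument is correct in substance, but it takes a genuinely different route from the paper: the paper proves this theorem essentially by citation, checking soundness by inspection and then invoking Proposition 2.3 of the earlier paper \cite{BergstraT2022b} (completeness of $E_{\mathsf{wcr},\bot}$ for the class of all $Enl_\bot(R)$ with $R$ a ring), together with the remark that the final step of that proof goes through with $\rat$ in place of $\Int$ to cover fields, and that sumterm equations do not mention division, so the result transfers to common meadows. You instead give a self-contained proof built on two ingredients: the semantic characterisation of validity by the pair $(P(t),V(t))$ --- the induced integer polynomial plus the variable set, with the $\bot$-valuation argument forcing $V(t)=V(t')$ and evaluation over $\rat$ forcing $P(t)=P(t')$ --- and a canonical-form lemma reducing every $\Sigma_r$-term to a polynomial sumterm plus a single zero-tag on the orphaned variables. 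This is, in effect, a reconstruction of the machinery the paper imports: your normal form is exactly the paper's notion of quasi-polynomial sumterm (Proposition~\ref{quasi-polynomial_reduction}), and your tag-consolidation via axiom (10) is the paper's Proposition~\ref{monomial_form}. What your version buys is transparency and a uniform treatment of all three classes at once via the $(P,V)$ invariant; what the paper's version buys is brevity and consistency with the source it relies on elsewhere (e.g.\ for Proposition~\ref{Qcomplete}).

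Two details in your normal-form lemma deserve explicit attention if this were written out in full. First, cancellations such as $(x\cdot y)+(-(x\cdot y))$ produce tags $0\cdot u$ whose variables may also occur in $P(t)$ with non-zero coefficient; you need the derivable absorption $p+0\cdot x=p$ for $x$ occurring in $p$ (via $1\cdot x+0\cdot x=(1+0)\cdot x$) so that the final tag contains \emph{exactly} the orphaned variables --- otherwise $\mathrm{NF}$ would not be determined by $(P(t),V(t))$ alone. Second, identities like $-(x\cdot y)=(-x)\cdot y$ are not axioms and must be derived in the weak setting where additive inverses are only inverse up to $0\cdot x$; they are derivable, but the derivations are not one-liners. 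Neither point is a gap in the approach, only in the level of detail.
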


\begin{proof} 
The validity of these axioms in all structures of the form $\mathsf{Enl}_\bot(R( \_/\_))$, for a ring $R$, is easy to check by inspection. Hence, the axioms are sound for  the equational theory of rings enlarged with $\bot$ and so they are sound for fields and common meadows.
Completeness for (ii) and for (iii) are equivalent assertions because we work without the division operator.

By Theorem 2.1 of \cite{BergstraT2022b}, the equations $E_{\mathsf{wcr},\bot}$ of 
Table~\ref{EwcrBot} provide a complete axiomatisation of the 
equational theory of the class of structures obtained as 
$\mathsf{Enl}_\bot(R)$
for some ring $R$, in particular for $R = \Int$. Now it is an immediate corollary of the proof of Theorem 2.1
 in~\cite{BergstraT2022b} that 
contemplating a smaller class of structures by requiring that 
$R$ is a field allows the same conclusion to be drawn: In the final lines of that proof, 
instead of considering a ring of integers one may use, to the same effect, a field of rationals. Since the equations over $\Sigma_{r,\bot}$ do not involve division, completeness trivially 
holds for the class of common meadows.
 \end{proof}

In section \ref{axioms_for _common_meadows}, we build the equations of common meadows by axiomatising division $\_/\_$ on top of this set $E_{\mathsf{wcr},\bot}$.


\section{Standard polynomials as syntactic terms over common meadows}\label{polynomials}

In conventional algebra, working with standard polynomials over rings and fields does not involve syntax nor, of course, $\bot$. 
Here we collect some results on standard polynomials over fields and, in particular, (i) formalise standard polynomials  as syntactic terms over signatures and (ii) establish a two-way transformation between standard polynomials and their formal syntactic counterparts. Note that working with standard polynomials in ring theory involve convenient short-cuts that need to be made explicit when polynomials are formally expressed syntactically as terms; for example, the role of coefficients in normal forms.


\subsection{Properties of standard polynomials and algebraically closed fields}\label{standard_polynomials}

Consider the polynomial rings $\Int[X_1,\ldots,X_n] \subseteq \rat[X_1,\ldots,X_n]$. We need to distinguish and restrict attention to specific types of multivariate polynomials. 

\begin{definition}
A {\em coefficient} of a polynomial in $\Int[X_1,\ldots,X_n]$ or $\rat[X_1,\ldots,X_n]$  is any number multiplying some variables in the polynomial. 
\end{definition}

Thus, any polynomial containing, say, the term $0 \cdot X_1 \cdot X_2$ 
will not be considered a  polynomial in $\Int[X_1,X_2]$ with non-zero coefficients. Each number $s \in \rat$, 
including $0$, counts as a polynomial with non-zero coefficients.

\begin{definition}
A polynomial $p$ in $\Int[X_1,\ldots,X_n]$ is \textit{primitive} if the 
greatest common divisor of its coefficients is $1$.
\end{definition}

%
%

Recalling subsection \ref{algebraically_closed _fields}, let $\overline{\rat}$ be an arbitrary but fixed algebraic closure of the field $\rat$. 

\begin{proposition}\label{irreducible_factors}
Suppose $p$ and $q$ are  polynomials in $\rat[X_1,\ldots,X_n]$ 
which take value $0$ at the same argument vectors in $\overline{\rat}^n$, 
then $p$ and $q$ have the same irreducible polynomials as factors (up to constant factors in $\rat$), 
in the ring $\rat[X_1,\ldots,X_n]$.
\end{proposition}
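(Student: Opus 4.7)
The plan is to combine unique factorisation in the UFD $\rat[X_1,\ldots,X_n]$ with Hilbert's Nullstellensatz applied over the algebraically closed field $\overline{\rat}$. First I would write $p = c\cdot p_1^{e_1}\cdots p_k^{e_k}$ and $q = d\cdot q_1^{f_1}\cdots q_\ell^{f_\ell}$ with $p_i,q_j$ irreducible in $\rat[X_1,\ldots,X_n]$ and $c,d\in\rat^*$. The goal then reduces, by symmetry and unique factorisation, to showing that every irreducible factor $p_i$ of $p$ also divides $q$ in $\rat[X_1,\ldots,X_n]$ (from which it follows that $p_i$ agrees with some $q_j$ up to a nonzero rational constant).

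Fix such a $p_i$. Since $p_i \mid p$, the zero set $Z(p_i) \subseteq \overline{\rat}^n$ is contained in $Z(p) = Z(q)$, so $q$ vanishes identically on $Z(p_i)$. I would then shift attention to the ring $\overline{\rat}[X_1,\ldots,X_n]$. Because $\rat$ has characteristic $0$ and is therefore perfect, the irreducible polynomial $p_i$ is separable; hence its factorisation over $\overline{\rat}$ has pairwise distinct irreducible factors $r_1,\ldots,r_s$. Consequently
\[
(p_i)\,\overline{\rat}[X_1,\ldots,X_n] \;=\; (r_1)\cap\cdots\cap (r_s),
\]
an intersection of prime ideals and so already radical. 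Hilbert's Nullstellensatz over $\overline{\rat}$ now yields $I(Z(p_i)) = \sqrt{(p_i)\,\overline{\rat}[X_1,\ldots,X_n]} = (p_i)\,\overline{\rat}[X_1,\ldots,X_n]$. Since $q$ lies in $I(Z(p_i))$, we conclude $p_i \mid q$ in $\overline{\rat}[X_1,\ldots,X_n]$.

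Finally I would descend the divisibility to $\rat[X_1,\ldots,X_n]$. Write $q = p_i\cdot h$ with $h\in \overline{\rat}[X_1,\ldots,X_n]$. Then $h = q/p_i$ lies in the fraction field $\rat(X_1,\ldots,X_n)$ as well. The identification
\[
\rat[X_1,\ldots,X_n] \;=\; \overline{\rat}[X_1,\ldots,X_n]\cap \rat(X_1,\ldots,X_n),
\]
which follows from the faithful flatness of $\rat \hookrightarrow \overline{\rat}$, forces $h \in \rat[X_1,\ldots,X_n]$, so $p_i \mid q$ over $\rat$. Repeating the argument with the roles of $p$ and $q$ reversed and invoking unique factorisation in the UFD $\rat[X_1,\ldots,X_n]$ matches up the irreducible factors of $p$ and $q$ up to constants in $\rat^*$.

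The main obstacle is the step that ensures $(p_i)$ extends to a radical (indeed squarefree-principal) ideal in $\overline{\rat}[X_1,\ldots,X_n]$; this is exactly where characteristic zero enters, via separability of irreducibles over the perfect field $\rat$. The descent from $\overline{\rat}$ to $\rat$ is standard and essentially bookkeeping once faithful flatness is invoked.
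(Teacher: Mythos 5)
Your proof is correct and takes essentially the same route as the paper, whose own proof is just a one-line appeal to ``repeated application of the Nullstellensatz and unique factorization'' in $\rat[X_1,\ldots,X_n]$. You have merely filled in the two details that citation glosses over --- that an irreducible of $\rat[X_1,\ldots,X_n]$ remains squarefree over $\overline{\rat}$ (so the extended principal ideal is radical and the Nullstellensatz gives divisibility), and that divisibility descends from $\overline{\rat}[X_1,\ldots,X_n]$ back to $\rat[X_1,\ldots,X_n]$ --- so there is no substantive difference in approach.
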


\begin{proof} This follows by repeated application of the Nullstellensatz (e.g.,~\cite{Lang2002}, Ch. IX, Theorem 1.5) and unique factorization (e.g.,~\cite{Lang2002}, Ch. IV, Corollary. 2.4). 
  
\end{proof} 

\begin{proposition} 
\label{LoG} (Lemma of Gauss.) 
Consider a  polynomial $p \in \Int[X_1,\ldots,X_n]$. 
Suppose that $p$ is non-zero and has a factorisation $p = r_1 \cdot  r_2$  in $\rat[X_1,\ldots,X_n]$. 
Then for some numbers $c_1,c_2 \in \rat$, 
$p = c_1 \cdot r_1 \cdot c_2 \cdot r_2$
and the polynomials $c_1 \cdot r_1$ and $c_2 \cdot r_2$ are in 
$\Int[X_1,\ldots,X_n]$.
\end{proposition}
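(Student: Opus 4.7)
The strategy is to normalise each factor $r_i$ as a rational scalar times a primitive polynomial in $\Int[X_1,\ldots,X_n]$, then invoke the classical Gauss lemma on contents to force the product of the two scalars to be an integer; the required $c_1, c_2$ then drop out immediately.

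Since $p \neq 0$, both $r_1$ and $r_2$ are nonzero. For each $i \in \{1,2\}$, I would first clear denominators to obtain an integer $d_i \neq 0$ with $d_i r_i \in \Int[X_1,\ldots,X_n]$, and then divide by the content of $d_i r_i$ to obtain a primitive polynomial $q_i \in \Int[X_1,\ldots,X_n]$ together with a scalar $\alpha_i \in \rat$ such that $r_i = \alpha_i q_i$. Substituting into the hypothesis yields
\[
p \;=\; r_1 r_2 \;=\; (\alpha_1 \alpha_2)\,(q_1 q_2).
\]

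The next step is the content form of the classical Gauss lemma for $\Int[X_1,\ldots,X_n]$: the content of a product equals the product of the contents, so $q_1 q_2$ is primitive. A short elementary remark then finishes the key deduction: if $\beta \in \rat$ and $h \in \Int[X_1,\ldots,X_n]$ is primitive, then $\beta h \in \Int[X_1,\ldots,X_n]$ if, and only if, $\beta \in \Int$, because writing $\beta = a/b$ in lowest terms forces $b$ to divide every coefficient of $h$, hence to divide the content $1$. Applied to $p = (\alpha_1 \alpha_2)(q_1 q_2) \in \Int[X_1,\ldots,X_n]$, this gives $\alpha_1 \alpha_2 \in \Int$.

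Finally I would set $c_1 = \alpha_1^{-1}$ and $c_2 = \alpha_1$. Then $c_1 c_2 = 1$, so $c_1 r_1 \cdot c_2 r_2 = r_1 r_2 = p$; and $c_1 r_1 = q_1 \in \Int[X_1,\ldots,X_n]$, while $c_2 r_2 = (\alpha_1 \alpha_2) q_2 \in \Int[X_1,\ldots,X_n]$ precisely because $\alpha_1 \alpha_2$ is an integer. The only substantive mathematical input is the content form of the Gauss lemma for $\Int[X_1,\ldots,X_n]$, which is the expected main obstacle; however, it reduces to the univariate case by induction on $n$ (or follows from $\Int[X_1,\ldots,X_n]$ being a UFD) and is standard textbook material (e.g., Lang, already cited in this section), so in the write-up I would invoke it by citation rather than re-prove it.
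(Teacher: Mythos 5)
Your proof is correct. Note, however, that the paper offers no proof of this proposition at all: it is stated as a classical fact (the Lemma of Gauss), implicitly deferred to the standard literature (Lang is cited for the neighbouring propositions), so there is no argument in the paper to compare against. Your write-up is the standard content-based proof and it goes through: the normalisation $r_i = \alpha_i q_i$ with $q_i$ primitive, the multiplicativity of content to get $q_1 q_2$ primitive, the observation that $\beta h \in \Int[X_1,\ldots,X_n]$ with $h$ primitive forces $\beta \in \Int$, and the choice $c_1 = \alpha_1^{-1}$, $c_2 = \alpha_1$ (which correctly arranges $c_1 c_2 = 1$, as the statement's form $p = c_1 \cdot r_1 \cdot c_2 \cdot r_2$ demands since $p \neq 0$). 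One small stylistic point: for the multivariate content lemma, rather than inducting on the number of variables (where the inductive notion of content changes base ring), it is cleaner to argue directly that if a prime $\ell$ divided every coefficient of $q_1 q_2$ then reduction modulo $\ell$ would give $\overline{q_1}\,\overline{q_2} = 0$ in the integral domain $\mathbb{F}_\ell[X_1,\ldots,X_n]$, contradicting primitivity of $q_1$ and $q_2$; but citing the textbook result, as you propose, is entirely in keeping with how the paper itself treats this proposition.
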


\begin{proposition}\label{unique}
Suppose that a non-zero primitive polynomial $p \in \Int[X_1,\ldots,X_n]$ has a factorisation 
$p = r_1 \cdot \ldots \cdot r_m$ with $r_1,\ldots,r_m$ irreducible polynomials in $\Int[X_1,\ldots,X_n]$. Then the
multiset $\{r_1, \ldots, r_m\}$ of polynomials, modulo the sign thereof,  is unique.
\end{proposition}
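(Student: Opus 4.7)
My plan is to derive this from classical facts about unique factorization domains, applied to the specific ring $\Int[X_1,\ldots,X_n]$, together with what primitivity buys us.

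First, I would recall that $\Int$ is a UFD and invoke the iterated Gauss theorem (the same one already cited in the preceding Proposition): if $R$ is a UFD then $R[X]$ is a UFD. Applying this $n$ times gives that $\Int[X_1,\ldots,X_n]$ is itself a UFD. Hence, for any non-zero non-unit element, factorisations into irreducibles exist and are unique up to reordering and multiplication by units. Since the only units of $\Int[X_1,\ldots,X_n]$ are $\pm 1$, uniqueness is already uniqueness up to signs and reordering — provided no unit factors are allowed to drift in or out.

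Next I would use primitivity to rule out constant irreducible factors. The irreducibles in $\Int[X_1,\ldots,X_n]$ split into two kinds: (a) prime integers (up to sign), and (b) non-constant polynomials that are irreducible in the usual sense. If some $r_i$ were a prime integer $q$, then $q$ would divide every coefficient of $p$, contradicting that $p$ is primitive (gcd of coefficients equals $1$). So every $r_i$ in the given factorisation is of type (b); in particular none of them is a unit.

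Then I would finish by applying UFD uniqueness directly. Suppose $p = r_1 \cdots r_m = s_1 \cdots s_k$ are two factorisations into irreducibles in $\Int[X_1,\ldots,X_n]$, both satisfying the primitivity-induced restriction above. By UFD uniqueness, $m = k$ and there is a permutation $\sigma$ of $\{1,\ldots,m\}$ and units $u_1,\ldots,u_m \in \{+1,-1\}$ with $s_i = u_i \, r_{\sigma(i)}$ for each $i$. Hence each $s_i$ equals $r_{\sigma(i)}$ up to sign, which is exactly the claim that the multiset $\{r_1,\ldots,r_m\}$ is unique modulo signs.

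The main obstacle, such as it is, is the bookkeeping in the second step: making sure that primitivity really forces away constant irreducibles so that the irreducible factors of $p$ in $\Int[X_1,\ldots,X_n]$ correspond cleanly to the irreducible factors of $p$ viewed in $\rat[X_1,\ldots,X_n]$ (via the Lemma of Gauss, Proposition~\ref{LoG}). Everything else is standard UFD machinery, and the role of primitivity is precisely to eliminate the ambiguity of scaling by non-unit integer constants that would otherwise spoil uniqueness.
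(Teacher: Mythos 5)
Your proof is correct. The paper gives no proof of Proposition~\ref{unique} at all --- it is stated as a standard fact, with unique factorization cited from Lang for the neighbouring proposition --- and your argument (iterated Gauss: $\Int[X_1,\ldots,X_n]$ is a UFD, its only units are $\pm 1$, and primitivity excludes prime-integer factors) is precisely the standard justification the paper is implicitly relying on.
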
 

\begin{proposition} \label{application} 
Suppose $\alpha$ and $\beta$ are primitive non-zero polynomials in the ring $\Int[X_1,\ldots,X_n]$ with the property that $\alpha$ and $\beta$ take value $0$ on the same argument vectors in $\overline{\rat}^n$. Then there are primitive
irreducible polynomials 
$\gamma_1,\ldots,\gamma_m \in \Int[X_1,\ldots,X_n]$ and positive natural numbers $a_1,\ldots,a_n, b_1,\ldots,b_m$ 
 such that in $\Int[X_1,\ldots,X_n]$,
$$\alpha = \gamma_1^{a_1} \cdot \ldots \cdot \gamma_n^{a_m} \  \textrm{and} \  \beta = 
\gamma_1^{b_1} \cdot \ldots \cdot \gamma_n^{b_m}.$$
\end{proposition}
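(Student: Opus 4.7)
The plan is to assemble the result from the three immediately preceding propositions: the Nullstellensatz-based proposition (which yields common irreducible factors in $\rat[X_1,\ldots,X_n]$ up to constants in $\rat^\times$), Gauss's lemma (Proposition~\ref{LoG}, to transfer between $\rat$- and $\Int$-coefficients), and the uniqueness Proposition~\ref{unique} for primitive integer polynomials.

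First, I apply the Nullstellensatz-based proposition to $\alpha$ and $\beta$: since they vanish on the same argument vectors in $\overline{\rat}^n$, they share the same list $q_1,\ldots,q_m$ of distinct irreducible factors in $\rat[X_1,\ldots,X_n]$, up to non-zero rational scalars. This yields, in $\rat[X_1,\ldots,X_n]$, factorisations
$$\alpha = c_\alpha \cdot q_1^{a_1} \cdots q_m^{a_m}, \qquad \beta = c_\beta \cdot q_1^{b_1} \cdots q_m^{b_m},$$
with positive integer exponents $a_i, b_i$ and some $c_\alpha, c_\beta \in \rat^\times$.

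Second, I invoke Gauss's lemma to replace each rational irreducible $q_i$ by a primitive irreducible polynomial $\gamma_i \in \Int[X_1,\ldots,X_n]$, unique up to sign. Collecting the rational scalars arising from these substitutions into single constants $d_\alpha, d_\beta \in \rat^\times$ produces
$$\alpha = d_\alpha \cdot \gamma_1^{a_1} \cdots \gamma_m^{a_m}, \qquad \beta = d_\beta \cdot \gamma_1^{b_1} \cdots \gamma_m^{b_m}.$$
By Gauss's content lemma a product of primitive integer polynomials is again primitive, so the right-hand sides (without the leading scalars) are primitive elements of $\Int[X_1,\ldots,X_n]$. Since $\alpha$ and $\beta$ are themselves primitive, the scalars $d_\alpha, d_\beta$ must be units of $\Int$, i.e.\ elements of $\{\pm 1\}$. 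Proposition~\ref{unique} then guarantees that the multisets $\{\gamma_i\}$ (modulo sign) and the exponent sequences $(a_i)$, $(b_i)$ are uniquely determined.

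Finally, since each $\gamma_i$ is defined only up to sign, I can replace $\gamma_i \mapsto -\gamma_i$, which flips $\gamma_i^{a_i}$ exactly when $a_i$ is odd and, independently, flips $\gamma_i^{b_i}$ exactly when $b_i$ is odd. A short parity case analysis over the indices $i$ then absorbs the signs $d_\alpha, d_\beta$ into the choices of the $\gamma_i$, giving the equalities in the form stated. This last bit of sign book-keeping --- in particular verifying that the combination $(d_\alpha, d_\beta) = (-1,-1)$ can always be neutralised using the available parities --- is the main fiddly obstacle, but it carries no algebraic content; the substance of the argument is entirely supplied by the three cited propositions.
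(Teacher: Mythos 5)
Your route is exactly the paper's: the unnumbered Nullstellensatz proposition supplies the common irreducible factors over $\rat[X_1,\ldots,X_n]$, Proposition~\ref{LoG} transfers them to primitive representatives in $\Int[X_1,\ldots,X_n]$, and Proposition~\ref{unique} gives uniqueness of the resulting multiset modulo sign. The paper's own proof stops at that point, concluding only that the collection of factors is unique ``modulo the sign of each polynomial''; it does not attempt the on-the-nose equalities with the unit factors resolved.

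The extra step you add --- absorbing $d_\alpha,d_\beta\in\{\pm1\}$ into the $\gamma_i$ by a parity analysis --- is where a genuine gap sits, and it is not just fiddly bookkeeping: it cannot always be carried out. Replacing $\gamma_i$ by $-\gamma_i$ changes $\gamma_i^{a_i}$ only when $a_i$ is odd, so if, say, $d_\alpha=-1$ while every $a_i$ is even, no choice of signs helps. Concretely, take $n=1$, $\alpha=-(X_1^2+1)^2$ and $\beta=X_1^2+1$: both are primitive and non-zero and vanish on exactly the same points $\pm i$ of $\overline{\rat}$, yet $\alpha\neq\gamma_1^{a_1}$ for either choice $\gamma_1=\pm(X_1^2+1)$, since any even power is $+(X_1^2+1)^2$. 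So the literal equalities claimed in the statement hold only up to a factor of $\pm1$ in front of each product; your argument is correct, and identical to the paper's, up to the point where you assert that this last sign can always be neutralised --- that assertion is false. The defect lies in the statement itself rather than in the substance of your proof (the paper's proof likewise establishes the factorisation only modulo sign), but a complete write-up should either weaken the conclusion to $\alpha=\pm\gamma_1^{a_1}\cdots\gamma_m^{a_m}$ or note explicitly which sign combinations are attainable.
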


\begin{proof} 
By Proposition \ref{irreducible_factors}, if in 
$\overline{\rat}$ it is the case that $\alpha$ and $\beta$ vanish on the same arguments both have the irreducible factors, say $\gamma_1,\ldots,\gamma_m$
 over $\rat[X_1,\ldots,X_n]$.  Using Proposition~\ref{LoG}, these irreducible polynomials may be chosen in
$\Int[X_1,\ldots,X_n]$, and with Proposition~\ref{unique} one finds that, viewed as a set, said collection of polynomials
is unique modulo the sign of each polynomial.

\end{proof} 
In the proof below only Proposition~\ref{application} will be used.


\subsection{Polynomial sumterms in the setting of common meadows}\label{polynomial sumterms}

The step from the ordinary algebra of rings and fields to working with equational logic in common meadows is a step from informal semantical practice to a formal syntax and semantics. It is not difficult, but it involves some details. 

The key syntactic idea is a special sumterm called a  \emph{polynomial sumterm} over $\Sigma_{r}$, and hence over our other signatures, which will work like a standard polynomial in conventional algebra.

To replicate in syntax the various standard polynomials, we begin with choosing sets of numerals, which are closed terms for denoting the naturals, integers and rationals. Numerals for natural numbers are: $0,1,\underline{2}, \underline{3},\dots$ where $\underline{2} \equiv 1+1,  \underline{3} \equiv  \underline{2} +1, \ldots$ In general: $ \underline{n+1}\equiv  \underline{n}+1$. (The precise definition of numerals is somewhat arbitrary and other choices are equally useful.)

For integers we will have terms of the form $-\underline{n}$ with $n>0$. We
will use the notation $\underline{n}$ for an arbitrary integer, thus $\underline{0} \equiv 0$, $\underline{1} \equiv 1$
and for positive $n$, $\underline{-n} \equiv - (\underline{n})$.

For rational numbers, we have terms of the form $\displaystyle \frac{\underline{n}}{\underline{m}}$ and $\displaystyle -\frac{\underline{n}}{\underline{m}}$ with $n>0, m>0$ and $\gcd(n,m) = 1$. 
In this way, for each $a \in \rat$ we have a unique numeral $t_a$ such that $\llbracket  t_a \rrbracket = a$ in $\rat$.

We build the polynomial sumterms in stages.

\begin{definition}
A {\em pure monomial} is a non-empty product of variables (understood modulo associativity and commutativity of multiplication).  

A {\em monomial} is a product $c \cdot p$ with $c$ a non-zero numeral for a rational number and $p$ a pure monomial.
\end{definition}

We will assume that pure monomials are written in a uniform manner mentioning the variables in the order inherited from the infinite listing $X_1,X_2,\ldots$ with powers expressed as positive natural numbers (where power $1$ is conventionally omitted). Recalling Definition \ref{sumterm_def} of sumterms:

\begin{definition}\label{def:polynomial_sumterms}
A \emph{polynomial sumterm} $p$ over $\Sigma_r$ is a flat sumterm (Definition \ref{sumterm_def}) for which 

(i) all summands are monomials,

(ii) the underlying pure monomials are  pairwise different, 
and 

(iii) none of the coefficients is $0$. 
\end{definition}

\noindent The idea of polynomial sumterms is that these formalise syntactically the notion of standard polynomials with non-zero coefficients. Moreover, in the case of (i), $1\cdot x + 1\cdot x$ would fail while $(1+1)\cdot x$ is a sumterm. Also, $0$ is a polynomial sumterm while $\bot$ is not a polynomial sumterm, as polynomial sumterms are terms over $\Sigma_{r}$:

\begin{definition}\label{non-zero_polynomial_sumterms}
A \emph{polynomial sumterm} $p$ is {\em non-zero} if it contains a variable or if it is a non-zero constant. 
\end{definition}


\begin{proposition} 
\label{Qcomplete}
Given polynomial sumterms $p$ and $q$, 
$$Enl_{\bot}(\rat) \models p=q \ \textrm{if, and only if,} \ E_{\mathsf{wcr},\bot} \vdash p = q.$$
\end{proposition}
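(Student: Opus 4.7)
The $(\Leftarrow)$ direction is soundness of equational logic: each axiom of $E_{\mathsf{wcr},\bot}$ is easy to verify in $Enl_\bot(\rat)$, using that $\rat$ is a commutative ring and that $\bot$ absorbs every operation in $Enl_\bot(\rat)$ by Definition \ref{enlarging_with_bot}. For the nontrivial direction the plan is to show that polynomial sumterms are essentially normal forms for the elements they denote: if $p$ and $q$ are polynomial sumterms with $Enl_\bot(\rat) \models p = q$, then (i) they have the same set of variables and the same coefficient for each pure monomial, and (ii) consequently they differ only by reordering of summands and of factors inside pure monomials, which the associativity and commutativity axioms of $E_{\mathsf{wcr},\bot}$ readily derive.

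For (i), I first establish $V(p) = V(q)$, the sets of variables actually occurring in the two sumterms, using $\bot$ to detect missing variables. Suppose for contradiction that some $x$ lies in $V(p) \setminus V(q)$. Then some summand $c \cdot \mu$ of $p$ has $x$ in its pure monomial $\mu$ and a nonzero rational coefficient $c$. Assign $x \mapsto \bot$ and send every other variable to $0 \in \rat$: that summand evaluates to $\bot$ by absorption of $\cdot$, and axiom (11) then propagates $\bot$ through the outer sum, so $p$ evaluates to $\bot$. Meanwhile $q$ involves only variables assigned to rationals, so $q$ evaluates to an element of $\rat$, contradicting $Enl_\bot(\rat) \models p = q$. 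With $V(p) = V(q) = \{X_1, \ldots, X_n\}$ in hand, I view $p$ and $q$ as standard polynomials in $\rat[X_1, \ldots, X_n]$: on any rational assignment their values in $Enl_\bot(\rat)$ coincide with their evaluations as standard polynomials in $\rat$, so semantic equality in $Enl_\bot(\rat)$ forces the two polynomials to agree on all of $\rat^n$. Since $\rat$ is infinite, the polynomials are identical in $\rat[X_1, \ldots, X_n]$; and because polynomial sumterms have pairwise distinct pure monomials with unique nonzero numeral coefficients, this forces $p$ and $q$ to consist of exactly the same summands up to the order of addition and of internal multiplications.

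The derivation $E_{\mathsf{wcr},\bot} \vdash p = q$ is then syntactic: axioms (1) and (2) permute the outer summands, and axioms (5) and (6) reorder factors inside each pure monomial. I expect the main obstacle to be bookkeeping rather than anything deep: one must check that the uniform-writing convention for pure monomials together with the uniqueness of the numeral $t_a$ for each rational $a$ really reduces the residual syntactic comparison to such commutativity/associativity rewrites, so that no further $\bot$-specific axiom is called on beyond axiom (11) in the variable-set step. Because polynomial sumterms themselves do not mention $\bot$, none of the other $\bot$-sensitive axioms --- the balanced replacements (4) and (10) in particular --- is needed in the derivation.
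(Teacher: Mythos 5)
Your argument is sound, but it is worth noting that the paper does not actually prove this proposition in situ: it is discharged in one line as ``an immediate corollary of the proof of Theorem 2.1 in \cite{BergstraT2022b}''. What you have written is, in effect, a self-contained reconstruction of that imported result, and the two key moves are exactly the right ones. First, the variable-detection step via $x \mapsto \bot$ (with the remaining variables sent to $0$) is the essential $\bot$-specific content --- it is precisely why $Enl_\bot(\rat)$ distinguishes $x$ from $x + 0\cdot y$, as the paper itself emphasises when motivating quasi-polynomial sumterms --- and your use of absorption plus axiom (11) there is correct. Second, once $V(p)=V(q)$, restricting to rational valuations and invoking that a nonzero polynomial over an infinite field has a non-root reduces everything to identity in $\rat[X_1,\ldots,X_n]$, which the paper separately records as Proposition \ref{polSumt2pol}/\ref{pol2polSumt}; you could have cited those rather than rederiving them. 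The residual syntactic step does depend, as you flag, on the paper's uniform-writing convention for pure monomials and on the uniqueness of the numeral $t_a$ for each coefficient $a$; with those conventions only associativity/commutativity of $+$ and $\cdot$ are needed, so your accounting of which axioms are used is accurate. What your route buys is transparency and independence from the external reference; what the paper's route buys is brevity and consistency with its general strategy of importing the $\Sigma_{r,\bot}$-level completeness facts wholesale from \cite{BergstraT2022b} and concentrating the new work on division.
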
 

\begin{proof}
This is an immediate corollary of the proof of Theorem 2.1 in~\cite{BergstraT2022b}.
\end{proof}


\subsection{Transitions between standard polynomials and polynomial sumterms}\label{}

We now turn to the relationship between standard polynomials and polynomial sumterms. Upon evaluation of the numerals that serve as its coefficients, a polynomial sumterm 
$p$ with variables in $X_1,\ldots,X_n$ can be understood as a standard polynomial $p'$ in the ring
$\rat[X_1,\ldots,X_n]$. Thus, we have the translation:
$$ p \mapsto p'.$$ 

Conversely, a polynomial $\alpha \in \Int[X_1,\ldots,X_n]$ can be written as a polynomial sumterm $\alpha^\star$ by
turning all coefficients in $\rat$ into the corresponding numerals. Thus, we have the translation:
$$\alpha  \mapsto \alpha^\star.$$

\begin{proposition}\label{polSumt2pol}
Given polynomial sumterms $p$ and $q$ involving the same variables among $X_1,\ldots,X_n$,
 the following equivalence holds: 
$$Enl_{\bot}(\rat) \models p = q \ \textrm{if, and only if,} \ p' = q' \textrm{in}\ \rat[X_1,\ldots,X_n].$$ 
\end{proposition}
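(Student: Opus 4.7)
The plan is a two-stage reduction. First, observe that polynomial sumterms are built without any occurrence of $\bot$ and, as $p$ and $q$ share variables by hypothesis, the equation $p = q$ is balanced; Lemma \ref{balance_lemma} then gives $Enl_\bot(R) \models p = q$ iff $R \models p = q$, stripping away the error element. Second, I would match the syntactic and standard polynomial forms: writing $p \equiv \sum_i c_i \cdot m_i$ in canonical form with pairwise distinct pure monomials $m_i$ and nonzero numeral coefficients $c_i$, a routine induction shows that the evaluation of $p$ in $R$ at any valuation $\sigma$ into $R^n$ coincides with the value $p'(\sigma(X_1), \ldots, \sigma(X_n))$ of the corresponding polynomial $p' \in R[X_1, \ldots, X_n]$. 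Consequently, $R \models p = q$ is equivalent to $p'$ and $q'$ inducing the same polynomial function on $R^n$, and the implication $p' = q' \Rightarrow R \models p = q$ is then immediate.

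The real content is in the reverse direction, which asks whether equal induced polynomial functions force equal formal polynomials. To close this, I would note that the coefficients appearing in polynomial sumterms are numerals for rational numbers and are interpretable in $R$ precisely because $R$ must accommodate a copy of $\rat$. Since $\rat$ is an infinite integral domain, the standard induction on the number of indeterminates---a nonzero univariate polynomial over an infinite integral domain has only finitely many roots---shows that any nonzero element of $\rat[X_1, \ldots, X_n]$ fails to vanish identically on $R^n$. Hence, if $p' \neq q'$, then the nonzero difference $p' - q'$ takes a nonzero value at some point of $R^n$, yielding a valuation that distinguishes $p$ from $q$ in $R$ and contradicting $R \models p = q$.

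The main obstacle, and the only nontrivial step, is this final transfer from polynomial functions to formal polynomials; everything else is bookkeeping around the canonical form of polynomial sumterms together with a direct appeal to Lemma \ref{balance_lemma}. The argument is robust provided that the coefficient conventions (rational numerals) place $R$ into the $\rat$-algebra setting implicit throughout the preceding development.
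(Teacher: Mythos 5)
Your first stage is exactly the intended argument: the hypothesis that $p$ and $q$ involve the same variables is there precisely so that $p=q$ is a balanced equation, and Lemma~\ref{balance_lemma} reduces validity in $Enl_{\bot}(R)$ to validity in $R$; combined with the routine observation that evaluating the sumterm $p$ at a point of $R^n$ yields the value of $p'$ there, this is all that is needed. The paper states the proposition without proof, and its later use (in Proposition~\ref{factorisation}, to transfer ``same zeros'' between $Enl_{\bot}(\overline{\rat})$ and $\overline{\rat}$) relies only on this evaluation correspondence.

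The gap is in your second stage. You read ``$p'=q'$ in $R$'' as identity of formal polynomials and then try to derive it from equality of the induced functions on $R^n$, which forces you to posit that $R$ contains a copy of $\rat$ (or is at least an infinite integral domain). Nothing in the statement or the surrounding development licenses this: the proposition is asserted for an arbitrary ring $R$, and the completeness proof explicitly works in prime fields $F_c$ and their algebraic closures $\overline{F_c}$ of positive characteristic, none of which contain $\rat$. Under your reading the proposition is simply false for such $R$: take $R=F_2$, $p\equiv x\cdot x$ and $q\equiv x$; these are polynomial sumterms in the same variable, $R\models p=q$ and hence $Enl_{\bot}(R)\models p=q$ by balance, yet $p'=X^2\neq X=q'$ as formal polynomials. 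The reading consistent with the rest of the paper is the weaker one, namely that ``$p'=q'$ in $R$'' means the polynomial identity holds in $R$, i.e., $p'$ and $q'$ induce the same function on $R^n$; under that reading your first stage already completes the proof, and the entire second stage, together with its unwarranted hypothesis on $R$, should be deleted rather than repaired.
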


\begin{proof}
A proof can be given with induction on the total number of summands of $p$ and $q$.
\end{proof}

Moreover, the following observations can be made, which, however, critically depend on the assumption that all coefficients of a polynomial are non-zero -- this is because, for example, working in $\rat$,   
$0 \cdot x = 0$ is true in $\rat$ but it is not true in $Enl_{\bot}(\rat)$; so that $(0 \cdot x)^\star$ differs from $0^\star$ in $Enl_{\bot}(\rat)$.

\begin{proposition}\label{pol2polSumt} 
For all polynomials 
$\alpha$ and $\beta$ with non-zero coefficients in $\Int[X_1,\ldots,X_n]$: 
\begin{center}
$\alpha = \beta$ in $\rat$ if, and only if,  $Enl_{\bot}(\rat) \models \alpha^\star = \beta^\star.$
\end{center} 
\end{proposition}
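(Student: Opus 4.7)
The plan is to reduce the claim to Proposition~\ref{polSumt2pol} by observing that the translations $(-)^\star$ and $(-)'$ are inverse in the direction ``polynomial $\to$ sumterm $\to$ polynomial'': for every $\alpha \in \rat[X_1,\ldots,X_n]$, the roundtrip $(\alpha^\star)'$ equals $\alpha$.

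First I would verify that $\alpha^\star$ is indeed a polynomial sumterm in the strict sense of the definition. A polynomial $\alpha \in \rat[X_1,\ldots,X_n]$ has a unique presentation as a finite formal sum $\sum_i c_i \, m_i$ with pairwise distinct pure monomials $m_i$ and non-zero coefficients $c_i \in \rat$; this is the standing convention emphasised in Section~\ref{standard_polynomials}. The translation $(-)^\star$ replaces each $c_i$ by its canonical numeral $t_{c_i}$ and preserves the summand structure, so clauses (i) (distinct pure monomials) and (ii) (no zero coefficient) of the definition of polynomial sumterm both hold. This is the step that, as the statement warns, \emph{critically depends on the non-vanishing of coefficients}: were zero coefficients permitted, $\alpha^\star$ would violate clause (ii) and the definition of polynomial sumterm would not be met. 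Evaluating the numerals back then returns $\alpha$ coefficient-for-coefficient, so $(\alpha^\star)' = \alpha$, and similarly $(\beta^\star)' = \beta$.

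Second, take $\alpha$ and $\beta$ inside a common ring $\rat[X_1,\ldots,X_n]$, chosen large enough to mention every variable occurring in either polynomial, so that $\alpha^\star$ and $\beta^\star$ are polynomial sumterms over the same ambient variable list. Proposition~\ref{polSumt2pol} then yields
$$Enl_{\bot}(R) \models \alpha^\star = \beta^\star \quad \Longleftrightarrow \quad (\alpha^\star)' = (\beta^\star)' \ \text{in}\ R,$$
and by the roundtrip identity of step one the right-hand side reads $\alpha = \beta$ in $R$, which is the required equivalence. The main---indeed essentially the only---obstacle is the bookkeeping in step one, namely checking that $(-)^\star$ truly lands inside polynomial sumterms; once that is settled, the proposition is an immediate corollary of Proposition~\ref{polSumt2pol}.
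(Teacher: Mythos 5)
The paper gives no proof of this proposition at all --- it is presented as an observation, flagged only by the remark that it ``critically depends on the assumption that all coefficients of a polynomial are non-zero'' --- and your reduction via the roundtrip identity $(\alpha^\star)' = \alpha$ together with Proposition~\ref{polSumt2pol} is precisely the argument that remark points to, with the non-zero-coefficient convention doing exactly the work you assign it. One point to tighten: Proposition~\ref{polSumt2pol} is stated for polynomial sumterms \emph{involving the same variables}, which is stronger than merely living over the same ambient list $X_1,\ldots,X_n$; so you should dispatch the mismatched-variable case separately, noting that coefficient-wise equality $\alpha = \beta$ forces identical variable sets in one direction, while in the other a valuation sending a variable occurring on only one side to $\bot$ makes that side $\bot$ but not the other, so that both sides of the stated equivalence are false and the reduction to Proposition~\ref{polSumt2pol} is only ever invoked when its hypothesis holds.
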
 

\begin{proof}
Again, a proof can be given with induction on the total number of summands of $p$ and $q$.
\end{proof}

\begin{proposition}\label{pol2polSumt_rationals} 
For all polynomials 
$\alpha$ and $\beta$ with non-zero coefficients: 
\begin{center}
$\alpha = \beta$ in $\rat$ if, and only if,  $Enl_{\bot}(\rat) \models \alpha^\star = \beta^\star$ 

if, and only if, 
$E_{\mathsf{wcr},\bot} \vdash \alpha^\star = \beta^\star.$ 
\end{center}
\end{proposition}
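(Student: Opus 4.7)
The plan is to read the three-way equivalence as a simple composition of the two immediately preceding propositions, specialised to the base ring $R = \rat$. First I would observe that the first biconditional,
$$\alpha = \beta \text{ in } \rat \iff Enl_{\bot}(\rat) \models \alpha^\star = \beta^\star,$$
is nothing other than Proposition~\ref{pol2polSumt} instantiated at $R = \rat$; there is nothing to prove beyond noting that $\rat$ is indeed a ring and that the hypothesis on non-zero coefficients (which is baked into the definitions of monomials and polynomial sumterms) is preserved by the translation $(\cdot)^\star$.

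Next, for the second biconditional I would invoke Proposition~\ref{Qcomplete}, which gives
$$Enl_{\bot}(\rat) \models p = q \iff E_{\mathsf{wcr},\bot} \vdash p = q$$
for any pair of polynomial sumterms $p,q$. The only thing to check is that $\alpha^\star$ and $\beta^\star$ actually qualify as polynomial sumterms in the sense of the paper, i.e.\ flat sumterms whose summands involve pairwise distinct pure monomials with non-zero rational coefficients. This is immediate from the construction of $(\cdot)^\star$: coefficients in $\rat$ are converted to their canonical numerals $t_a$, the zero polynomial maps to $0$, and distinct monomials in the standard polynomial give rise to distinct pure monomial factors in the sumterm.

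Chaining the two biconditionals then gives the triple equivalence stated in the proposition. A minor subtlety I would flag in passing is the variable set: if $\alpha, \beta \in \rat[X_1,\dots,X_n]$ are equal as polynomials, then any $X_i$ appearing with non-zero coefficient in $\alpha$ appears also in $\beta$, so $\alpha^\star$ and $\beta^\star$ live over a common variable set, matching the hypothesis of Proposition~\ref{polSumt2pol}; in the contrary direction, if $\alpha \neq \beta$ then the same observation lets us embed both into a common $\rat[X_1,\dots,X_n]$ before applying the translation, which is harmless.

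I do not anticipate any genuine obstacle here: the statement is a packaging of Propositions~\ref{pol2polSumt} and~\ref{Qcomplete}, whose substantive content (unique factorisation via the Nullstellensatz and Gauss's lemma for the first, and the ring-level completeness theorem from \cite{BergstraT2022b} for the second) has already been discharged. The only mild care needed is in checking that the syntactic shape of $\alpha^\star$ and $\beta^\star$ matches the definition of polynomial sumterm, which is a matter of inspecting the translation.
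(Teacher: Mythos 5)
Your proposal is correct and matches the paper's own proof, which likewise obtains the result by chaining Proposition~\ref{pol2polSumt} (at $R=\rat$) with Proposition~\ref{Qcomplete}. The extra checks you flag (that $\alpha^\star,\beta^\star$ are genuine polynomial sumterms, and the remark on variable sets) are harmless elaborations of the same one-line argument.
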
 

\begin{proof}
This follows by combining Proposition \ref{pol2polSumt} with Proposition \ref{Qcomplete}.

\end{proof}

\noindent Properties of polynomial sumterms and standard polynomials correspond as follows: 

(i) $p$ is non-zero $\Longleftrightarrow$ $p'$ is non-zero, 

(ii) $p$ has degree $n$ $\Longleftrightarrow$ $p'$ has degree $n$, 

(iii) $p$ is irreducible $\Longleftrightarrow$ $p'$ is irreducible, 

(iv) $p$ is primitive $\Longleftrightarrow$ $p'$ is primitive, 

(v) $q$ is a factor of $p$ $\Longleftrightarrow$ $q'$ is a factor of $p'$, 

(vi) any polynomial sumterm $p$ can be written as $\underline{a} \cdot q$ for a non-zero integer $a$ and a primitive polynomial sumterm $q$.


\subsection{Quasi-polynomial sumterms}

Consider, for instance, the $\Sigma$-terms 
$$x \  \textrm{and} \ x + 0 \cdot y.$$
On evaluating in a commutative ring $R$, these terms over $\Sigma_{r}$ define the same functions, but they do not do so in the enlargement $Enl_{\bot}(R)$
as they take different values upon choosing $x=0, y = \bot$.  Thus, the terms need to be distinguished: since $0$ usefully occurs as a coefficient in a polynomial when working with $\bot$.  We will work with a second kind of polynomial sumterm in order to make these issues explicit.

\begin{definition}
A \emph{quasi-polynomial sumterm} $p$  is either 

(i) a polynomial sumterm, or 

(ii) a monomial of the form 
$0 \cdot r$ with $r$ a pure monomial with all its variables occurring in the first power only, 
or 

(iii) the sum $q + 0 \cdot r$ of a polynomial sumterm
$q$ and a monomial of the kind in (ii) and such that no variables commonly occur both in $q$ and in $r$.
\end{definition}

The following proposition provides a rationale for the specific form of quasi-polynomial sumterms as just defined.

\begin{proposition}\label{monomial_form} 
Given a sumterm $p$ which contains at least one variable, a pure monomial $q$ can be found, with variables occurring with power $1$ only, such that $0 \cdot p = 0 \cdot q$.
\end{proposition}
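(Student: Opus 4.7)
The plan is to give a direct semantic argument. I would take $q$ to be the pure monomial formed as the product of the distinct variables of $p$, in some fixed order, each appearing with power $1$; since by hypothesis $p$ has at least one variable, this $q$ is a well-formed pure monomial. I then verify that $0 \cdot p = 0 \cdot q$ is valid in every common meadow, which suffices, and by Theorem~\ref{SumtermCFB} it is additionally derivable from $E_{\mathsf{wcr},\bot}$.

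To establish semantic validity, fix a common meadow $A = Enl_\bot(F(\_/\_))$ and a valuation $\sigma$ into $A$. I would split into two cases. First, suppose $\sigma(x_i) = \bot$ for some variable $x_i$ occurring in $p$. Every operation of $\Sigma_r$ is $\bot$-absorptive in $A$, so an easy induction on term structure shows that any $\Sigma_r$-term in which $x_i$ occurs evaluates to $\bot$ under $\sigma$. Since $x_i$ occurs in both $p$ and $q$, both $\llbracket p \rrbracket_\sigma$ and $\llbracket q \rrbracket_\sigma$ equal $\bot$, and therefore $\llbracket 0 \cdot p \rrbracket_\sigma = 0 \cdot \bot = \bot = \llbracket 0 \cdot q \rrbracket_\sigma$. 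Second, suppose $\sigma(x_i) \in F$ for every variable $x_i$ of $p$. Since $p$ and $q$ are $\Sigma_r$-terms and contain no occurrence of $\bot$, both $\llbracket p \rrbracket_\sigma$ and $\llbracket q \rrbracket_\sigma$ lie in $F$. In the field $F$ the law $0 \cdot y = 0$ holds for all $y$, so $\llbracket 0 \cdot p \rrbracket_\sigma = 0 = \llbracket 0 \cdot q \rrbracket_\sigma$. Combining the two cases yields $A \models 0 \cdot p = 0 \cdot q$.

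The main point requiring care is the choice of variables in $q$: it must contain exactly the variables of $p$, not more and not fewer. If $q$ contained an extra variable $z$ not occurring in $p$, then a valuation with $\sigma(z) = \bot$ and $\sigma(x_i) \in F$ for all $x_i$ in $p$ would make $\llbracket 0 \cdot q \rrbracket_\sigma = \bot$ while $\llbracket 0 \cdot p \rrbracket_\sigma = 0$; conversely, omitting a variable of $p$ from $q$ fails by the same $\bot$-sensitivity argument run in reverse. Once this matching of variables is fixed, the two semantic cases above are essentially bookkeeping over the absorptivity of $\bot$ and the ring identity $0 \cdot y = 0$ in $F$.
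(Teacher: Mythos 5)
Your proof is correct, but it takes a genuinely different route from the paper. The paper proves the proposition purely syntactically: it gives a small terminating rewrite system ($x+0 \to x$, $0\cdot(x\cdot y) \to (0\cdot x)+(0\cdot y)$, $0\cdot(x+y)\to(0\cdot x)+(0\cdot y)$, $0\cdot\underline{n}\to 0$, idempotence of the summands $0\cdot x$, and finally recombination $0\cdot x + 0\cdot y \to 0\cdot(x\cdot y)$), each rule being derivable from $E_{\mathsf{wcr},\bot}$, so the equality $0\cdot p = 0\cdot q$ comes out as an explicit equational derivation without any appeal to completeness. You instead identify $q$ directly as the product of the distinct variables of $p$, prove $\mathsf{CM}\models 0\cdot p = 0\cdot q$ by the two-case semantic analysis ($\bot$-absorption versus $0\cdot y=0$ in $F$), and then invoke Theorem~\ref{SumtermCFB} to convert validity into derivability. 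Your argument is shorter and makes transparent the essential point --- that $0\cdot p$ depends only on the variable set of $p$, so $q$ must contain exactly those variables --- whereas the paper's argument is self-contained at the level of equational logic and manifestly computable, which matters for the later remark that all transformations in the completeness proof are effective (there is, however, no circularity in your use of Theorem~\ref{SumtermCFB}, since that theorem is established independently by citation). One small technicality: $0\cdot p = 0\cdot q$ has $\cdot$ as leading symbol on both sides, so it is not literally an element of $SumEqn(\Sigma_r)$ as the paper defines it; to apply Theorem~\ref{SumtermCFB} verbatim you should either pass to the sumterm equation $(0\cdot p)+0 = (0\cdot q)+0$ (and use axiom $x+0=x$), or note that the underlying result cited there (Proposition 2.3 of \cite{BergstraT2022b}) gives completeness for all $\Sigma_{r,\bot}$-equations. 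Also, be aware that your construction should be read modulo the paper's convention that variables in a pure monomial are listed in a fixed order; with that understood, your $q$ coincides with the one the rewrite system produces.
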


\begin{proof} 

Let $x_1, \ldots, x_n$ be the variables that occur in $p$, where these variables are pairwise different. Then take the pure monomial $q = x_1 \cdot \ldots \cdot x_n$.
%
%
%
%
%
%
%
%
\end{proof} 


The sum of two polynomial sumterms need \textit{not} be provably equal by $E_{\mathsf{wcr},\bot}$ to a polynomial sumterm. 
Indeed, since $x + (-x) = 0 \cdot x$, the term is merely a quasi-polynomial sumterm. However, recalling the definitions of \ref{def:polynomial_sumterms} and \ref{non-zero_polynomial_sumterms}, conversely:

\begin{lemma}
Using $E_{\mathsf{wcr},\bot}$, a product $r = p \cdot q$ of two non-zero polynomial sumterms $p$ and $q$ 
is provably equal  to a polynomial sumterm. 
\end{lemma}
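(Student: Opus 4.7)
The plan is to show that $p \cdot q$ is provably equal to the polynomial sumterm $s := (p' \cdot q')^\star$ obtained by translating back into syntax the standard polynomial product $p' \cdot q' \in \rat[X_1,\ldots,X_n]$. Since $p'$ and $q'$ are non-zero and $\rat[X_1,\ldots,X_n]$ is an integral domain, $p' \cdot q'$ is non-zero, so $s$ is a genuine (zero-coefficient-free) polynomial sumterm.

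I would build the derivation in two stages. First, apply distributivity (axiom (8)) together with commutativity and associativity of $+$ and $\cdot$ (axioms (1), (2), (5), (6)) to rewrite $p \cdot q$ as a flat sum $\sum_{i,j} (a_i b_j) \cdot P_i Q_j$ of monomials, where the $a_i,b_j$ are the coefficients of $p,q$ and the $P_i,Q_j$ their pure monomials. Second, use distributivity in reverse, together with provable coefficient arithmetic (supplied by Proposition~\ref{Qcomplete}), to collect summands sharing the same pure monomial. This yields an expression of the form $s + T_0$, where $s$ is (up to rearrangement) precisely $(p' \cdot q')^\star$ and $T_0 = \sum_\ell 0 \cdot N_\ell$ gathers the residual summands whose coefficients happened to cancel to $0$.

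The remaining task is to prove $E_{\mathsf{wcr},\bot} \vdash s + T_0 = s$, and this rests on two observations. Syntactically, if $x$ is any variable occurring in the polynomial sumterm $s$, then $E_{\mathsf{wcr},\bot} \vdash s + 0 \cdot x = s$: picking a monomial $c \cdot M' \cdot x$ of $s$ with $c \neq 0$, axioms (6), (8), (3) give $c M' x + 0 \cdot x = (c M' + 0) \cdot x = c M' x$. Combined with axiom (10) and distributivity, which together yield $0 \cdot (x_1 \cdots x_k) = 0 \cdot x_1 + \cdots + 0 \cdot x_k$ and $0 \cdot x^n = 0 \cdot x$, this lets one absorb each $0 \cdot N_\ell$ into $s$ one variable at a time, provided every variable of $N_\ell$ already appears in $s$. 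The semantic observation needed is precisely this: every variable of $T_0$ appears in $s$. To establish it, take any variable $X$ appearing in $p$ (the case for $q$ is symmetric) and view $p'$ as a polynomial in $X$ over the coefficient ring $R = \rat[\text{other variables}]$; its leading coefficient in $X$ is non-zero, as is that of $q'$ in $X$ (or else $q' \in R \setminus \{0\}$ if $X$ does not appear in $q$), so the leading coefficient of $p' \cdot q'$ in $X$ is non-zero by the integral-domain property of $R$, forcing $s$ to contain a non-zero monomial involving $X$.

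The main obstacle is this last semantic claim that no variable is annihilated by cancellation. It is not a syntactic triviality and is exactly where the integral-domain property of $\rat[X_1,\ldots,X_n]$ enters; without it, cancellations could in principle erase every monomial containing a given variable, and $p \cdot q$ would only reduce to a quasi-polynomial sumterm of the third kind from the preceding definition, rather than to a genuine polynomial sumterm.
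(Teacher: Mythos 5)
Your proposal is correct and follows essentially the same route as the paper: expand $p \cdot q$ into a quasi-polynomial sumterm and then absorb the residual $0 \cdot x$ summands by showing that every variable of $p$ or $q$ survives with a non-zero coefficient in $p' \cdot q'$. The only real difference is that you justify this non-annihilation claim by an explicit leading-coefficient/integral-domain argument, whereas the paper appeals more briefly to the fact that ``as a function $p \cdot q$ depends on $x$''; your version is, if anything, the more carefully argued of the two.
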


\begin{proof}
We write $t=_{\mathsf{wcr},\bot} r$ for $E_{\mathsf{wcr},\bot}\vdash t=r$.  First, notice that $p \cdot q$ is provably equal to a quasi-polynomial sumterm. 
Now for example, consider $p \equiv x+1, q \equiv x-1$, then 
$r =  p \cdot q =_{\mathsf{wcr},\bot} 
(x^2 + 0 \cdot x) + (-1)=_{\mathsf{wcr},\bot} (x \cdot (x + 0)) + (-1) =_{\mathsf{wcr},\bot} x^2 + (-1)$, which is a polynomial sumterm. 

More generally, if a variable $x$ occurs in either $p$ or $q$ then $x$ occurs in $p \cdot q$ as well, so that as a function 
$p \cdot q$ depends on $x$.
From this it follows that in the polynomial $\alpha$ with $\alpha = p' \cdot q'$, 
$x$ must occur at least once in a monomial $\beta$ of $\alpha$ of which the coefficient is 
non-zero, so that 
$$\beta = \hat{\beta} \cdot x \ =  (\hat{\beta} +0)\cdot x \ = \hat{\beta} \cdot x   + 0 \cdot x = 
\beta +0 \cdot x.$$

It follows that an additional summand $0 \cdot x$ is unnecessary in the 
quasi-polynomial sumterm $\alpha^\star$, which, 
after considering other variables in a similar manner, for that reason is provably equal with $E_{\mathsf{wcr},\bot}$ to 
a polynomial sumterm.

\end{proof}

\begin{proposition}\label{factorisation} 
Let $p$ and $q$ be integer polynomial sumterms, both with non-zero degree, with variables among $X_1,\ldots,X_n$ and such that $p'$ as well as $q'$ are primitive polynomials. 

Suppose that in $Enl_{\bot}(\overline{\rat})$ both $p$ and $q$ have value $0$ on the same argument 
vectors in  $Enl_{\bot}(\overline{\rat})^n$. Then, there are 

(i) a positive natural number $m$, 

(ii) integer polynomial sumterms $r_1,\ldots,r_m$ with non-zero degree, such that $r_1',\ldots,r_n'$ are primitive polynomials, and 

(iii) non-zero natural numbers $a_1,\ldots,a_n, b_1,\ldots,b_m$
such that 
$$E_{\mathsf{wcr},\bot} \vdash p =  r_1^{a_1} \cdot \ldots \cdot r_n^{a_m} \ \textrm{and} \ 
E_{\mathsf{wcr},\bot} \vdash q =  r_1^{b_1} \cdot \ldots \cdot r_n^{b_m}.$$
\end{proposition}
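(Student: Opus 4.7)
The plan is to descend from the syntactic setting of polynomial sumterms evaluated in $Enl_{\bot}(\overline{\rat})$ to classical polynomial algebra over $\overline{\rat}$, apply Proposition \ref{application}, and then lift the resulting factorisation back to provable equations in $E_{\mathsf{wcr},\bot}$ via the translations $p \mapsto p'$ and $\alpha \mapsto \alpha^\star$ together with Proposition \ref{pol2polSumt_rationals}.

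First I would argue that the hypothesis on common zeroes in $Enl_{\bot}(\overline{\rat})^n$ is equivalent to the statement that $p'$ and $q'$ vanish on the same argument vectors in $\overline{\rat}^n$. Since $p$ and $q$ have non-zero degree, each of them depends on at least one variable, and substituting $\bot$ for any occurring variable forces the term to evaluate to $\bot$ by absorption. Hence the only common zeroes in $Enl_{\bot}(\overline{\rat})^n$ actually lie in $\overline{\rat}^n$, and on that subset the values of $p$ and $q$ agree with the values of the standard polynomials $p'$ and $q'$. Thus $p'$ and $q'$ are primitive non-zero polynomials in $\Int[X_1,\ldots,X_n]$ sharing the same zero set in $\overline{\rat}^n$, so that Proposition \ref{application} applies.

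Proposition \ref{application} then supplies primitive irreducible polynomials $\gamma_1, \ldots, \gamma_m \in \Int[X_1,\ldots,X_n]$ and positive natural numbers $a_1, \ldots, a_m, b_1, \ldots, b_m$ with $p' = \gamma_1^{a_1} \cdot \ldots \cdot \gamma_m^{a_m}$ and $q' = \gamma_1^{b_1} \cdot \ldots \cdot \gamma_m^{b_m}$. I set $r_i := \gamma_i^\star$, which is an integer polynomial sumterm with $r_i' = \gamma_i$ primitive and irreducible; in particular each $r_i$ has non-zero degree, since the only primitive polynomials of degree $0$ are units and units cannot occur as irreducible factors.

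The final step is to lift these polynomial identities to derivations in $E_{\mathsf{wcr},\bot}$. The syntactic expression $r_1^{a_1} \cdot \ldots \cdot r_m^{a_m}$ is a product of non-zero polynomial sumterms, so by iterating the product lemma above it is provably equal in $E_{\mathsf{wcr},\bot}$ to some polynomial sumterm $s$. By construction $s' = p'$ in $\rat[X_1,\ldots,X_n]$, so Proposition \ref{pol2polSumt_rationals} gives $E_{\mathsf{wcr},\bot} \vdash p = s$ and hence $E_{\mathsf{wcr},\bot} \vdash p = r_1^{a_1} \cdot \ldots \cdot r_m^{a_m}$; the identical argument handles $q$. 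The principal technical obstacle is precisely this final lifting: one must be sure that the syntactic product of polynomial sumterms can be brought into canonical polynomial-sumterm form inside $E_{\mathsf{wcr},\bot}$ without acquiring spurious $0 \cdot x$ summands that would leave it only quasi-polynomial. This is exactly what the product lemma guarantees for non-zero factors of non-zero degree, so the bookkeeping goes through.
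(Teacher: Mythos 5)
Your proposal is correct and follows essentially the same route as the paper's own proof: restrict the common-zero hypothesis to non-$\bot$ valuations so that Proposition~\ref{application} applies to $p'$ and $q'$, set $r_i \equiv \gamma_i^\star$, use the product lemma to reduce the syntactic products to polynomial sumterms, and lift the resulting identities in $\rat[X_1,\ldots,X_n]$ to $E_{\mathsf{wcr},\bot}$-derivations via Proposition~\ref{pol2polSumt_rationals} (the paper routes this last step through $Enl_\bot(\overline{\rat})$, $Enl_\bot(\rat)$ and Proposition~\ref{Qcomplete}, which is the same content). The only cosmetic difference is that the paper explicitly derives, from $\bot$-valuations, that $p$ and $q$ contain exactly the same variables, whereas in your argument this falls out of the shared irreducible factorisation.
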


\begin{proof} Let $p$ and $q$ be as assumed in the statement of the Proposition.
Now $p$ and $q$ evaluate to $0$ for the same argument vectors in $Enl_{\bot}(\overline{\rat})^n$. It follows that $p$ and $q$ must contain precisely the same variables. To see this, assume otherwise that say
variable $x$ occurs in $p$ and not in $q$ (the other case will work similarly) and 
then choose a valuation for the other variables in 
$\overline{\rat}$ which solves $q=0$, by additionally having value $\bot$ for $x$ a valuation is obtained where $q=0$ and $p = \bot$, thereby contradicting the assumptions on $p$ and $q$.
Both $p'$ and $q'$ then
have non-zero degree and are non-zero polynomials with, using Proposition~\ref{polSumt2pol}, the same zeroes in 
$(\overline{\rat})^n$.  

Now Proposition~\ref{application} can be applied with $\alpha \equiv p', \beta \equiv q'$ thus 
finding polynomial sumterms $\gamma_1,\ldots,\gamma_m$, and numbers  $a_1,\ldots,a_m, b_1,\ldots,b_m$ such that in 
$\overline{\rat}$: 
$$p' = \alpha = \gamma_1^{a_1} \cdot \ldots \cdot \gamma_m^{a_m} \ \textrm{and}  \ q' = \beta = \gamma_1^{b_1} \cdot \ldots \cdot \gamma_m^{b_m}.$$ 
Now choose:
$r_1\equiv \gamma_1^\star,\ldots,r_m\equiv \gamma_m^\star$. It follows that 
$$Enl_{\bot}(\overline{\rat}) \models p = r_1^{a_1} \cdot \ldots \cdot r_m^{a_m} \ \textrm{and} \
Enl_{\bot}(\overline{\rat}) \models q = r_1^{b_1} \cdot \ldots \cdot r_m^{b_m}.$$ 
Moreover, with Proposition~\ref{pol2polSumt}, we know that $r_1^{a_1} \cdot \ldots \cdot r_m^{a_m}$ is provably  equal to a polynomial 
sumterm, say $P$ (by $E_{\mathsf{wcr},\bot}$) and that 
$r_1^{b_1} \cdot \ldots \cdot r_m^{b_m}$ is provably equal to a polynomial sumterm, say $Q$. So we find 
$Enl_{\bot}(\overline{\rat}) \models p = P$ and $Enl_{\bot}(\overline{\rat}) \models q = Q$, and in consequence
$Enl_{\bot}(\rat) \models p = P$ and $Enl_{\bot}(\rat) \models q = Q$.

Lastly, using Proposition~\ref{Qcomplete}, 
$E_{\mathsf{wcr},\bot}\vdash p = P$ and $E_{\mathsf{wcr},\bot}\vdash q = Q$ from which one finds that 
$E_{\mathsf{wcr},\bot}\vdash p = r_1^{a_1} \cdot \ldots \cdot r_m^{a_m}$ and 
$E_{\mathsf{wcr},\bot}\vdash q = r_1^{b_1} \cdot \ldots \cdot r_m^{b_m}$ thereby completing the proof.
 
\end{proof}

The quasi-polynomial sumterm  introduces extra variables via a linear monomial. In \cite{BergstraT2022b} extra variables are introduced using a linear sum $0 \cdot (x_1 + \ldots + x_1)$, which takes the same values. From \cite{BergstraT2022b} we take the following information concerning sumterms:

\begin{proposition}\label{quasi-polynomial_reduction}
Let $t$ be a $\Sigma_{r, \bot}$-term, then either 

(i) $E_{\mathsf{wcr},\bot} \vdash t = \bot$; or

(ii) there is a quasi-polynomial sumterm $p$ such that $E_{\mathsf{wcr},\bot} \vdash t = p$. 

\noindent In each case the reduction is computable.
\end{proposition}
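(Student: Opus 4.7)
I would argue by structural induction on $t$, building a quasi-polynomial sumterm bottom up. The base cases are immediate: if $t \equiv \bot$, clause~(i) is witnessed directly; if $t$ is $0$, $1$, or a variable, then (after trivial rewriting, e.g.\ $x \equiv 1 \cdot x$) $t$ is already a polynomial sumterm, hence a quasi-polynomial sumterm. The inductive step handles $-t_1$, $t_1 + t_2$, and $t_1 \cdot t_2$. By the induction hypothesis each $t_i$ is either provably equal to $\bot$ or provably equal to a quasi-polynomial sumterm $p_i$. If any $t_i$ reduces to $\bot$, axiom~(11) together with the routinely derivable absorption laws $-\bot = \bot$ and $x \cdot \bot = \bot$ propagates $\bot$ upward, placing $t$ in case~(i).

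Assume henceforth $t_1, t_2$ reduce to quasi-polynomial sumterms $p_i = q_i + 0 \cdot r_i$ (with either summand possibly absent). I would close this class under the three operations. Negation is handled by pushing the unary minus through the sum and negating the integer coefficients of $q_1$, while noting that $-(0 \cdot r_1) = 0 \cdot r_1$ holds in every common meadow and is thus $E_{\mathsf{wcr},\bot}$-provable by Theorem~\ref{SumtermCFB}. Addition collects like monomials in $q_1 + q_2$; each cancellation $c \cdot m + (-c) \cdot m$ is rewritten via axiom~(4) into $0 \cdot m$, producing extra $0 \cdot (\cdot)$ summands which, together with $0 \cdot r_1$ and $0 \cdot r_2$, are compressed into a single $0 \cdot r$ monomial by Proposition~\ref{monomial_form}. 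For multiplication one distributes $p_1 \cdot p_2$ fully: the cross-term $q_1 \cdot q_2$ is provably a polynomial sumterm by the lemma preceding Proposition~\ref{factorisation}, while the remaining three summands each carry a factor of the form $0 \cdot (\cdot)$ and again collapse to a single $0 \cdot r$ summand via Proposition~\ref{monomial_form}.

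Two structural constraints on the output still have to be enforced. First, each variable of $r$ must occur only to the first power, which is exactly the content of Proposition~\ref{monomial_form}. Second, no variable may occur both in the polynomial sumterm part $q$ and in $r$; I would discharge this using the auxiliary identity $q + 0 \cdot x = q$ whenever a variable $x$ already occurs in $q$. This identity holds in every common meadow (both sides evaluate equally whether $x$ is defined or equals $\bot$), hence it is $E_{\mathsf{wcr},\bot}$-provable by Theorem~\ref{SumtermCFB}; alternatively it is established directly by writing $q = x \cdot Q + R$ and rewriting $x \cdot Q + 0 \cdot x = x \cdot (Q + 0) = x \cdot Q$ via distributivity and axiom~(3). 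Every transformation above is a finitely bounded syntactic rewrite, so the whole reduction is effectively computable.

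The step I expect to be most delicate is the multiplication case. Even after recognising $q_1 \cdot q_2$ as a polynomial sumterm, the bookkeeping needed to compress all the $0 \cdot (\cdot)$ debris into one monomial and then prune any variables that now coincide with variables of $q_1 \cdot q_2$ is the main obstacle; the shared-variable cleanup identity from the previous paragraph is precisely what does the heavy lifting there, and care is required to ensure the strict form of Definition of a quasi-polynomial sumterm (rather than merely provable equality with one) is achieved at the end.
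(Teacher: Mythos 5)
The paper does not prove this proposition at all: it is imported verbatim from \cite{BergstraT2022b} (``From \cite{BergstraT2022b} we take the following information\dots''), so there is no in-paper argument to compare against. Your structural induction --- propagate $\bot$ by the derived absorption laws, close the class of quasi-polynomial sumterms under $-$, $+$ and $\cdot$, compress the $0\cdot(\cdot)$ debris with Proposition~\ref{monomial_form}, prune shared variables with $q + 0\cdot x = q$ --- is the standard normal-form argument and, as far as the mathematics goes, it works; in particular you correctly retain a $0\cdot m$ summand when coefficients cancel (so that $x+(-x)$ reduces to $0\cdot x$ and not to $0$), and you correctly only delete a junk variable when it survives in the polynomial part, which is exactly what soundness at $\bot$-valuations requires.

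The one point you should repair is your repeated appeal to Theorem~\ref{SumtermCFB} to certify that auxiliary identities such as $-(0\cdot r)=0\cdot r$ or $-(x+y)=(-x)+(-y)$ are $E_{\mathsf{wcr},\bot}$-provable. First, that theorem as stated concerns only equations between \emph{sumterms} (leading symbol $+$), so it does not literally cover these identities. Second, and more seriously, in the source development the quasi-polynomial normal form is a \emph{lemma towards} the completeness of $E_{\mathsf{wcr},\bot}$, so justifying steps of the reduction by ``valid, hence provable by completeness'' is circular if the proposition is to carry its usual weight. The fix is cheap: each of these identities has a short direct derivation from axioms (1)--(11) (you already give one for $q+0\cdot x=q$ via $x\cdot Q + x\cdot 0 = x\cdot(Q+0)$; the others follow similarly from axioms (4), (8) and (10), e.g.\ $0\cdot x + 0\cdot y = 0\cdot(x\cdot y)$), and the paper itself asserts that the $\bot$-absorption laws for $\cdot$ and $-$ are derivable from axiom (11). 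With those appeals replaced by direct derivations, your proof is a sound, self-contained, and visibly computable reduction.
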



\section{Equational axioms for  common meadows}\label{axioms_for _common_meadows}

We now add to the  equational axioms $E_{\mathsf{wcr},\bot}$ in Table~\ref{EwcrBot} to make a set of equational axioms for
common meadows: 
$E_{\mathsf{ftc-cm}}$ in Table~\ref{FTCcm}. These equations have been presented in 
a different but equivalent form
 in~\cite{BergstraT2022b}.
By inspection, one can validate soundness:

\begin{proposition}\label{soundness}(Soundness of $E_{\mathsf{ftc-cm}}$.) 
$\mathsf{CM} \models E_{\mathsf{ftc-cm}}.$
\end{proposition}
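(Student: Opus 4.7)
The plan is a direct, axiom-by-axiom verification: for each equation $e \in E_{\mathsf{ftc-cm}}$ and each common meadow $A = \mathsf{Enl}_\bot(F(\_/\_))$, I would check that $A \models e$. Since $E_{\mathsf{wcr},\bot} \subseteq E_{\mathsf{ftc-cm}}$ and Theorem~\ref{SumtermCFB} has already confirmed that these ring-with-$\bot$ axioms hold in every common meadow, I would restrict attention to the axioms in $E_{\mathsf{ftc-cm}} \setminus E_{\mathsf{wcr},\bot}$, i.e.\ those involving the division symbol $\_/\_$.

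For each such axiom $t_1(\vec x) = t_2(\vec x)$ and each valuation $\sigma$ of its variables in $F \cup \{\bot\}$, I would split the argument in two. First, if $\sigma(x_i) = \bot$ for some $i$: the axioms of $E_{\mathsf{ftc-cm}}$ are (and must be, cf.\ Lemma~\ref{balance_lemma}) balanced -- both sides mention the same variables -- so $\bot$-absorption through every operation of $\Sigma_{cm}$ forces $\llbracket t_1 \rrbracket_\sigma = \bot = \llbracket t_2 \rrbracket_\sigma$. Second, if $\sigma(x_i) \in F$ for every $i$: I would further distinguish on whether any denominator of a subterm evaluates to $0$. When no denominator vanishes, the evaluation takes place entirely in $F(\_/\_)$ and the axiom reduces to an elementary identity of fractions valid in the field $F$. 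When some denominator vanishes, that side of the equation evaluates to $\bot$ by the definition of $\mathsf{Enl}_\bot$, and one must verify that the other side also evaluates to $\bot$.

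The main obstacle is precisely this final sub-case, where the structural design of each division axiom carries the proof load: one has to confirm that a vanishing denominator on one side produces a matching obstruction (a vanishing denominator, an explicit $\bot$, or a product with a $\bot$-valued factor) on the other side. This is the motivation for working with balanced axioms and for the characteristic form in which the division axioms of Table~\ref{FTCcm} are stated (for instance, multipliers of the shape $y/y$ appearing alongside the algebraic identities so that the $\bot$-patterns on the two sides coincide). Once this coincidence is checked for each of the finitely many new axioms, soundness follows by combining the two cases. Since the check is mechanical per axiom, no conceptually new ingredient is needed beyond Theorem~\ref{SumtermCFB} and the definitional equations $x/0 = \bot$, $\bot/x = \bot$, $x/\bot = \bot$ that are built into $\mathsf{Enl}_\bot(F(\_/\_))$.
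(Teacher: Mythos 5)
Your proposal is correct and takes essentially the same approach as the paper, which disposes of this proposition with the single remark that soundness ``can be validated by inspection'' in every $\mathsf{Enl}_\bot(F(\_/\_))$; your three-way case split (absorption via balancedness when some variable is $\bot$, ordinary field identities of fractions when no denominator vanishes, and a matching $\bot$-obstruction on the other side when one does) is precisely that inspection carried out for axioms (12)--(18). One minor aside: your parenthetical claim that the axioms \emph{must} be balanced is not literally forced by soundness (an unbalanced equation both of whose sides always evaluate to $\bot$ would also be valid in every common meadow), but nothing in your verification depends on that remark.
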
 

Some consequences of these axioms merit attention:
\begin{proposition}

(i) $E_{\mathsf{ftc-cm}} \vdash -\frac{x}{y}			=  \frac{-x }{y }$,

(ii) $E_{\mathsf{ftc-cm}} \vdash \frac{1}{(\frac{1}{x})} = \frac{x \cdot x}{x}$, and

(iii) $E_{\mathsf{ftc-cm}} \vdash \frac{x}{( \frac{u}{v})}		=  x \cdot \frac{v \cdot v}{u \cdot v}$,

\end{proposition}
\begin{proof}
(i) First notice that with the help of equations~\ref{AX1} and~\ref{AX2} one derives $x \cdot \frac{1}{z} =  \frac{x}{z}$, next notice that using Proposition~\ref{Qcomplete}, 
$E_{\mathsf{wcr},\bot}  \vdash -\,( x\cdot z )  = (-x) \cdot z$, now combine these observation taking $z = \frac{1}{x}$. 

(ii) For convenience, we will delete mention of $\vdash$, and make implicit use of some equations of Table~\ref{EwcrBot}, as well as of proof steps made earlier in the proof. Now, we calculate: 
$$\frac{1}{(\frac{1}{x})} = \frac{1}{1 \cdot \frac{1}{x}} = 
\frac{1}{(1+ 0)  \cdot \frac{1}{x}} =
\frac{1}{\frac{1}{x} + (0 \cdot \frac{1}{x})} =  
 \frac{1+ (0 \cdot \frac{1}{x})}{(\frac{1}{x} )} =  
 \frac{\frac{1}{1}+ \frac{0}{x}}{(\frac{1}{x} )} =  
 \frac{\frac{(1 \cdot x) + (0 \cdot 1)}{1 \cdot x}}{(\frac{1}{x} )}$$
 $$= 
 \frac{x}{x} \cdot \frac{1}{(\frac{1}{x} )} = 
 \frac{x}{x \cdot \frac{1}{x} } =  
 \frac{x}{(\frac{x}{x} )} = 
   \frac{x}{1 + \frac{0}{x} } = 
    \frac{x + \frac{0}{x} }{1}=
  \frac{x}{1}+ \frac{0}{x}=
   \frac{(x \cdot x) + (1 \cdot 0) }{x} =
  \frac{x \cdot x}{x}.$$
 
(iii) Immediate using (ii).
\end{proof}

Using Prover9/Mace4 Alban Ponse has checked, \cite{Ponse2024}, that each of the 5 axioms listed in 
Table~\ref{FTCcm} is independent from the other axioms plus $E_{\mathsf{wcr},\bot}$. 
The equations of Table~\ref{FTCcm} including the axioms of Table~\ref{EwcrBot} are not logically independent, however, as indeed the equations of 
Table~\ref{FTCcm} may be used to derive the equation $0 \cdot (x \cdot x) = 0 \cdot x$  in~$E_{\mathsf{wcr},\bot}$ from the other ones. We have not pursued a corresponding optimization 
by deleting axioms from Table~\ref{EwcrBot} because then we would invalidate Proposition~\ref{Qcomplete} which we consider to be of independent relevance. 
\begin{table}
\centering
\hrule
\begin{align}
	{\tt import~}  &~ E_{\mathsf{wcr},\bot} \nonumber \\
	\label{AX1}
	x				&= \frac{x}{1}\\
	\label{AX2}
	\frac{x}{y}	\cdot \frac{u}{v}		&=  \frac{x \cdot u}{y \cdot v}\\	
	\label{AX3}	
	\frac{x}{y} + \frac{u}{v}	&= \frac{(x\cdot v) + (y \cdot u)}{y \cdot v}\\
	\label{AX25}
	\frac{x}{y + (0 \cdot z)} &= \frac{x + (0 \cdot z)}{y}\\
	\label{AX5}
	\bot				&= \frac{1}{0}
\end{align}
\hrule
\medskip
\caption{$E_{\mathsf{ftc-cm}}$: Equational axioms for fracterm calculus for common meadows}
\label{FTCcm}
\end{table}


\subsection{On fracterms and flattening}

The introduction of division, or a unary inverse, introduces fractional expressions. 
The theory of fractions is by no means clear-cut if the lack of consensus on their nature is anything to go by \cite{Bergstra2020}. 
However, in abstract data type theory, fractions can be given a clear formalisation as a 
syntactic object -- as a term over a signature containing $\_/\_$  or  $-^{-1}$ with a certain form. 
Rather than fraction we will speak of a \textit{fracterm}, 
following the terminology of~\cite{Bergstra2020} (item 25 of 4.2).

\begin{definition}
A {\em fracterm} is a term over $\Sigma_{cm}$ whose leading function symbol is division $\_/\_$.   A {\em flat fracterm} is a fracterm with only one division operator.
\end{definition}

Thus, fracterms have form $\displaystyle \frac{p}{q}$, and flat fracterms have the form $\displaystyle \frac{p}{q}$ in which $p$ and $q$ do not involve any occurrence of division. Note that fracterms are generally defined as terms of the signature $\Sigma_m$ of meadows, but we will use them only over the $\Sigma_{cm}$ of common meadows (and its subsignatures). The following simplification process is a fundamental property of working with fracterms.

\begin{theorem}\label{FF}
(Fracterm flattening \cite{BergstraP2015}.) 
For each term $t$ over $\Sigma_{cm}$ there exist $p$ and $q$ terms over $\Sigma_{r}$, i.e., both not involving $\bot$ or division, such that 
$$\displaystyle E_{\mathsf{ftc-cm}} \vdash t = \frac{p}{q},$$
 i.e., $t$ is provably equal to a flat fracterm. Furthermore, the transformation is computable.
\end{theorem}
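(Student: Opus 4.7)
I will prove the theorem by structural induction on the $\Sigma_{cm}$-term $t$, maintaining the invariant that the flat fracterm $p/q$ produced at each stage has numerator $p$ and denominator $q$ that are both terms over $\Sigma_r$ (no $\bot$, no $/$). The base cases dispose of the variables and constants. For any variable $x$, axiom (12) gives $x = x/1$, and the same axiom applied to the closed constants $0$ and $1$ produces $0/1$ and $1/1$; the constant $\bot$ is handled directly by axiom (18), which provides $\bot = 1/0$. In every base case the numerator and denominator lie in $T(\Sigma_r)$.

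For the inductive step, suppose $t$ is built from subterms $t_1$ and $t_2$ that, by the inductive hypothesis, are provably equal to flat fracterms $p_1/q_1$ and $p_2/q_2$ with $p_i, q_i \in T(\Sigma_r)$. Each function symbol of $\Sigma_{cm}$ is dispatched by a single axiom or a short combination: axiom (13) handles unary minus, yielding $(-p_1)/q_1$; axiom (14) handles multiplication, yielding $(p_1 \cdot p_2)/(q_1 \cdot q_2)$; and axiom (15) handles addition, yielding $((p_1 \cdot q_2) + (q_1 \cdot p_2))/(q_1 \cdot q_2)$. The division case $t_1/t_2$ is the most involved. I first apply axiom (16) with $x := p_1/q_1$, $u := p_2$, $v := q_2$ to obtain $(p_1/q_1) \cdot (q_2 \cdot q_2)/(p_2 \cdot q_2)$, and then collapse this product using axiom (14), arriving at the flat fracterm $(p_1 \cdot q_2 \cdot q_2)/(q_1 \cdot p_2 \cdot q_2)$. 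In every case the new numerator and denominator are built from $+, -, \cdot$ applied to the $p_i$ and $q_i$, so the $\Sigma_r$-invariant is preserved.

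The effective procedure implicit in this induction gives the computability claim: one simply recurses on the syntax tree of $t$ and applies the appropriate axiom at each node, so the whole transformation is primitive recursive in the structure of $t$. The step that requires the most care, and is the main obstacle to a naive argument, is the nested-division case, because axiom (16) is deliberately more elaborate than the classical rule $x/(u/v) = x \cdot v/u$: it introduces an extra factor $v \cdot v$ in the numerator and $v$ in the denominator, precisely so that the equation remains sound when the absorptive element $\bot$ is plugged in for $v$. Axiom (17), by contrast, plays no role in flattening; it governs the normalisation of denominators in the presence of absorptive variables and contributes elsewhere in the completeness argument.
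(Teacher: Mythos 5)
Your proof is correct and follows essentially the same route as the paper, which dispatches the theorem in one line as ``immediate by structural induction on the structure of $t$, noting that any occurrence of $\bot$ can be replaced by $1/0$.'' You have simply written out the induction in full, correctly identifying which axiom of $E_{\mathsf{ftc-cm}}$ handles each function symbol (including the slightly delicate nested-division case via axiom (16) followed by (14)) and rightly observing that axiom (17) is not needed for flattening.
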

\begin {proof}  Immediate by structural induction on the structure of $t$, noting that any occurrence of $\bot$ can be replaced by $1/0$. 
 
\end{proof}

The set $E_{\mathsf{ftc-cm}}$ of equational axioms for common meadows 
has been designed so that the proof of
fracterm flattening is straightforward;  it also allows
other results of use for this paper to be obtained easily. 
More compact but logically equivalent axiomatisations can be found. In~\cite{BergstraP2015}, using inverse rather than division, a set of logically independent axioms for common meadows is given, from which 
fracterm flattening is shown, the proof of which then is correspondingly harder.

From now on we will omit brackets thanks to  associativity  commutativity of addition and multiplication.


\subsection{Completeness} 
We prove that the equations $E_{\mathsf{ftc-cm}}$ are complete for the equational theory $\mathsf{CM}$ of common meadows, i.e., for the equational theory of the class of common  meadows:

\begin{theorem} 
\label{MainThm}
For any equation $t=r$ over $\Sigma_{cm}$  the following holds:
\begin{center}
$E_{\mathsf{ftc-cm}} \vdash t=r$ if, and only if, $t=r$ is valid in all common  meadows.\end{center}
\end{theorem}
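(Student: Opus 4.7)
Soundness of $E_{\mathsf{ftc-cm}}$ is given by Proposition~\ref{soundness}. For completeness, assume $\mathsf{CM} \models t = r$. My plan is to reduce each side to a canonical flat fracterm and then prove equality via a common-denominator argument that respects the $\bot$-structure of common meadows.

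Fracterm flattening (Theorem~\ref{FF}) puts $t = p_1/q_1$ and $r = p_2/q_2$ with $p_i, q_i \in T(\Sigma_r)$. Proposition~\ref{quasi-polynomial_reduction} reduces each of $p_i, q_i$ to a quasi-polynomial sumterm, and axiom~(17) moves every $0 \cdot$ summand from the denominator into the numerator; after collecting those summands via Proposition~\ref{monomial_form}, each side takes the normal form $\frac{\pi_i + 0 \cdot S_i}{\kappa_i}$ with $\pi_i, \kappa_i$ polynomial sumterms and $S_i$ a pure monomial. The degenerate case $\kappa_i \equiv 0$ is dispatched by deriving $\frac{x}{0} = x \cdot \frac{1}{0} = x \cdot \bot = \bot$ from axioms~(12), (14), (18) and the derivable $x \cdot \bot = \bot$, and using the hypothesis plus a valuation argument to force $\kappa_{3-i} \equiv 0$ too. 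Otherwise, interpreting the equation in $\mathsf{Enl}_\bot(\overline{\rat}(\_/\_))$ yields three facts: (i) a common variable set $V$, obtained by sending each variable in turn to $\bot$; (ii) $Z(\kappa_1) = Z(\kappa_2)$ in $\overline{\rat}^n$, from valuations where one denominator vanishes; and (iii) $\pi_1 \kappa_2 = \pi_2 \kappa_1$ as polynomials, from the rest. I would then enlarge each $S_i$ to contain every variable of $V$, which is provable since $\pi + 0 \cdot v = \pi$ is valid in common meadows whenever $v$ already occurs in the fracterm, and hence is $E_{\mathsf{wcr},\bot}$-derivable by Theorem~\ref{SumtermCFB}. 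After this normalisation, $0 \cdot S_1 \kappa_2$ and $0 \cdot S_2 \kappa_1$ share variable set $V$ and are $E_{\mathsf{wcr},\bot}$-provably equal.

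The heart of the argument is a common-denominator lemma: whenever every irreducible factor of $y$ divides $q$,
$$E_{\mathsf{ftc-cm}} \vdash \frac{p + 0 \cdot S}{q} = \frac{(p + 0 \cdot S)\cdot y}{q \cdot y}.$$
I would derive this from two intermediate identities. A \emph{cancellation} identity $\frac{uv}{wv} = \frac{u}{w} + \frac{0}{v}$ follows by applying axiom~(15) to $\frac{u}{w} + \frac{0}{v}$ to obtain $\frac{uv + 0 \cdot w}{wv}$, then axiom~(17) to rewrite this as $\frac{uv}{wv + 0 \cdot w} = \frac{uv}{wv}$. An \emph{absorption} identity $\frac{u}{w \cdot v} + \frac{0}{v} = \frac{u}{w \cdot v}$, valid when $v$ is a single factor of the denominator, follows by using axiom~(14) to rewrite $\frac{u}{wv} = u \cdot \frac{1}{w} \cdot \frac{1}{v}$ and $\frac{0}{v} = 0 \cdot \frac{1}{v}$, then using distributivity (axiom~(8)) to collapse the sum into $(u \cdot \frac{1}{w} + 0) \cdot \frac{1}{v}$. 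For a general $y$, Proposition~\ref{factorisation} supplies an irreducible factorisation $y = r_1^{b_1} \cdots r_m^{b_m}$ with every $r_i$ dividing $q$; axioms~(10), (8), (14) yield $\frac{0}{y_1 y_2} = \frac{0}{y_1} + \frac{0}{y_2}$, reducing $\frac{0}{y}$ to a sum of single-factor traps $\frac{0}{r_i}$, each absorbed individually. Applying the lemma with $y = \kappa_{3-i}$ makes both $t$ and $r$ provably equal to fracterms with denominator $\kappa_1 \kappa_2$; the numerators $\pi_1 \kappa_2 + 0 \cdot S_1 \kappa_2$ and $\pi_2 \kappa_1 + 0 \cdot S_2 \kappa_1$ are then provably equal by Proposition~\ref{pol2polSumt_rationals} (for the polynomial part) and the $0 \cdot$-monomial equality already established.

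The main obstacle will be the absorption identity: the three moves that together make $\frac{0}{y}$ vanish into a denominator already containing $y$---routing $\frac{0}{v}$ through distributivity on $\frac{1}{v}$, decomposing $\frac{0}{y_1 y_2}$ via axiom~(10), and reducing to irreducible factors via Proposition~\ref{factorisation}---are the non-obvious ingredients. Every other step is either a structural reduction proved earlier in the paper or a routine appeal to the completeness of $E_{\mathsf{wcr},\bot}$ for sumterms (Theorem~\ref{SumtermCFB}).
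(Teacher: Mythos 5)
Your overall architecture (flatten, reduce to quasi-polynomial sumterms, push the $0\cdot{}$ summands into the numerator with axiom (17), then reach a common denominator and compare numerators) matches the paper's, and your route to the common denominator is genuinely different: the paper raises every shared irreducible factor to a uniform power $K$ via the identity $\frac{x+0\cdot w}{y\cdot z^g}=\frac{(x\cdot z^h)+0\cdot w}{y\cdot z^{g+h}}$, whereas you cross-multiply and then dispose of the resulting $\frac{0}{y}$ term through your cancellation and absorption identities. Those two identities are indeed derivable from the axioms as you sketch, and in the main case they give a cleaner endgame. But there is a genuine gap in the hypothesis of your common-denominator lemma and in how you propose to verify it.

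All the semantic information you extract --- the common variable set, $Z(\kappa_1)=Z(\kappa_2)$, and $\pi_1\kappa_2=\pi_2\kappa_1$ --- comes from the single model $\mathsf{Enl}_\bot(\overline{\rat}(\_/\_))$, which only sees characteristic $0$. The equation $\frac{p+0\cdot S}{q}=\frac{(p+0\cdot S)\cdot y}{q\cdot y}$, however, is sensitive to the integer content of $y$: already $\frac{1}{x}=\frac{\underline{2}\cdot x}{\underline{2}\cdot x\cdot x}$ fails in any common meadow of characteristic $2$ (the right-hand side is identically $\bot$ there), even though $\underline{2}$ contributes no irreducible factor in $\rat[X_1,\ldots,X_n]$ and $x$ divides $x$. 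So ``every irreducible factor of $y$ divides $q$'' must be read in $\Int[X_1,\ldots,X_n]$, where integer primes count as irreducible factors --- and then the hypothesis, for $y=\kappa_{3-i}$ and $q=\kappa_i$, is \emph{not} a consequence of $Z(\kappa_1)=Z(\kappa_2)$ over $\overline{\rat}$: the polynomials $x$ and $2x$ have identical zero sets over $\overline{\rat}$ but different contents. What is missing is the fact that the contents of $\kappa_1$ and $\kappa_2$ have the same prime divisors, and that can only be extracted from the validity of $t=r$ in common meadows of \emph{positive} characteristic: the paper evaluates in $\overline{F_c}$ at a point where the primitive part of one denominator equals $1$ to derive a contradiction whenever a prime $c$ divides one content but not the other, and the same device is what settles the subcase in which both denominators are nonzero numerals (your argument would otherwise treat that case vacuously and incorrectly). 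With this positive-characteristic step added before the common-denominator lemma is invoked, the rest of your argument goes through.
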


\begin{proof} 
The soundness of $E_{\mathsf{ftc-cm}}$ was noted in Proposition \ref{soundness}.

For completeness, suppose that $t=r$ is valid in all common  meadows, i.e., $\mathsf{CM} \models t=r$. In what follows, for brevity, we will write $\vdash e$ for 
$E_{\mathsf{ftc-cm}} \vdash e$.

By the Fracterm Flattening Theorem \ref{FF}, we can find $\Sigma_{r}$ terms $p,q,u,v$ such that 
$$ \vdash t = \frac{p}{q} \  \textrm{and} \  \vdash r = \frac{u}{v}.$$ 
By Proposition \ref{quasi-polynomial_reduction}, each of these four terms can be written in the form of a quasi-polynomial sumterm: 
$$\vdash p = s_p + 0 \cdot h_p,  \  \vdash q = s_q + 0 \cdot h_q,   \  \vdash u = s_u + 0 \cdot h_u,  \  \vdash p = s_v + 0 \cdot h_v$$ 
with $s_p, s_q, s_u, s_v$ polynomial sumterms and $h_p$, $h_q$,
$h_u$ and $h_v$ linear monomials. Note we don't consider case (i) of \ref{quasi-polynomial_reduction} because $\bot$ is not a $\Sigma_r$ term.

Substituting these quasi-polynomial sumterms for $p,q,u,v$ and applying axiom~\ref{AX25} of $E_{\mathsf{ftc-cm}}$, we get
$$\displaystyle \vdash \frac{p}{q} = \frac{(s_p + 0 \cdot h_p) + 0 \cdot h_q}{s_q}  \ \ \textrm{and}  \ \
\displaystyle \vdash \frac{u}{v} = \frac{(s_u + 0 \cdot h_u) + 0 \cdot h_v}{s_v}.$$

\noindent So, to prove $\vdash t = r$ we need to prove 
$$\vdash  \displaystyle \frac{(s_p + 0 \cdot h_p) + 0 \cdot h_q}{s_q} =\frac{(s_u + 0 \cdot h_u) + 0 \cdot h_v}{s_v}$$ 
assuming its validity in all common meadows. 

Now, notice that in all common meadows $s_q$ and $s_v$ must produce $0$ on precisely the same non-$\bot$ valuations of the variables occurring in either of both expressions (because otherwise one of the terms would yield $\bot$ while the other does not, making the equality invalid). This fact, that $s_q = 0$ and $s_v = 0$ on the same valuations of variables, will used in arguments by contradiction.

Six cases will be distinguished, of which the first five are straightforward to deal with: 

Case (i). $s_q \equiv 0$ and $s_v \equiv 0$. Here, trivially 
$$\displaystyle \vdash \frac{(s_p + 0 \cdot h_p) + 0 \cdot h_q}{s_q} = \bot =\frac{(s_u + 0 \cdot h_u) + 0 \cdot h_v}{s_v}.$$

Case (ii). $s_q \equiv 0$ and $s_v \not\equiv 0$. This is not possible because $s_q$ and $s_v$ must produce $0$ on the same valuations of variables and if, for a polynomial sumterm $h$, $h \not\equiv 0$ then it must be that for some common meadow $Enl_{\bot}(G(\_/\_))$ and valuation $\sigma$,  we have $Enl_{\bot}(G(\_/\_)), \sigma \not \models h=0$.  

Case (iii). The symmetric case $s_q \not\equiv 0$ and $s_v \equiv 0$  is not possible for reasons corresponding with (ii). 

Case (iv). $s_q$ and $s_v$ are both non-zero numerals, say $s_q = \underline{a}$ and $s_v = \underline{b}$. Now, we claim that the respective prime factorisations of $a$ and $b$ both contain the same prime numbers. To see this, otherwise assume that say prime $c$ is a divisor of $a$ while $c$
 is not a divisor of $b$. Then, working in the prime field $F_c$ of characteristic $c$, $s_q$ takes value $0$ while $s_v$ does not, so that in $Enl_{\bot}(F_c)$ the equation $\frac{p}{q} = \frac{u}{v}$ is not valid.  The symmetric case that $b$ has a prime factor $c$ which is not a divisor of $a$ works in the same way.

Case (v). One of $s_q$ and $s_v$ is a non-zero numeral, while the other one contains one or more variables, i.e., has degree $1$ or higher. This situation is impossible because in that case the polynomial sumterm of nonzero degree takes both value zero and nonzero (on appropriate arguments) in $\overline{\rat}$ and for that reason also on appropriate non-$\bot$ valuations for $(\overline{\rat})_\bot$.

Case (vi). Lastly, to complete the proof of the theorem, we are left with the main case: that both $s_q$ and $s_v$ are polynomials with non-zero degree. It suffices to prove this equation $\Theta$
$$\displaystyle \frac{s_p + 0 \cdot (h_p + h_q)}{s_q} =\frac{s_u + 0 \cdot (h_u +  h_v)}{s_v}$$ 
from the validity of $\Theta$ in all common meadows.

 Now, as a first step, chose non-zero integers $a$ and $b$ as follows: 
 $a$ is the $\gcd$ of the coefficients of $s_q$ and $b$ is the $\gcd$ of the coefficients of $s_v$. 
 Further, choose polynomial sumterms $\hat{s}_q$ and $\hat{s}_v$  such that 
 $$\vdash s_q = \underline{a} \cdot \hat{s}_q \  \textrm{and} \  \vdash s_v = \underline{b} \cdot \hat{s}_v.$$ 
 
Next, we show that $a$ and $b$ must have the same prime factors. If not, say  $c$ is a prime factor of $a$ but not of $b$. In the algebraic closure $\overline{F_c}$ of the prime field $F_c$ of characteristic $c$  a solution (i.e., a valuation $\sigma$)  exists for the equation $s_v -1 = 0$;  this equation must be of non-zero degree as $s_v$ is of non-zero degree.  
  
 We find that  $\overline{F_c},\sigma \models \underline{c} = 0$ so that 
 $\overline{F_c},\sigma \models \underline{a} = 0$, which implies $\overline{F_c},\sigma \models s_q = 0$. 
 Furthermore, since $c$ is a prime number and $c$ is not a factor of $b$, then $b \neq 0$ in $F_c$ and
 $\overline{F_c},\sigma \models \underline{b} \neq 0$ and $\overline{F_c},\sigma \models\hat{s}_v = 1$ so that 
 $\overline{F_c},\sigma \models s_v = \underline{b} \cdot \hat{s}_v \neq 0$, which contradicts the assumption that $s_q$ and $s_v$ vanish on the same 
 non-$\bot$ valuations stated just before beginning the six cases.
 
Without loss of generality, we may assume that $a$ and $b$ are both positive, and we take an increasing sequence of prime factors $c_1,\ldots,c_k$ with respective positive powers 
 $e_1,\ldots,e_k$ and $f_1,\ldots,f_k$ such that 
 $$a = c_1^{e_1} \cdot \ldots \cdot c_k^{e_k} \  \textrm{and}  \ b = c_1^{f_1} \cdot \ldots \cdot c_k^{f_k}.$$
 The next step is to notice that $\hat{s}_q$ and $\hat{s}_v$ must have the same zero's in $\overline{\rat}$ and to 
 apply Proposition~\ref{factorisation} on the polynomial sumterms $\hat{s}_q$ and $\hat{s}_v$, 
 thereby obtaining a sequence of irreducible and primitive polynomials $r_1,\ldots,r_m$ with positive powers $a_1,\ldots,a_m$ and $b_1,\ldots,b_m$ such that 
 $$\vdash \hat{s}_q = r_1^{a_1} \cdot \ldots \cdot r_m^{a_m} \  \textrm{and}  \
 \vdash \hat{s}_v = r_1^{b_1} \cdot \ldots \cdot r_m^{b_m}.$$
 By substitutions into equation  $\Theta$ above, now we know that 
 $$\vdash \displaystyle \frac{s_p + 0 \cdot (h_p + h_q)}{\underline{a}*\hat{s}_q } =\frac{s_u + 0 \cdot (h_u +  h_v)}{\underline{b}*\hat{s}_v}$$ 
 because 
 $$\vdash s_q = \underline{a} \cdot \hat{s}_q \ \textrm{and} \ \vdash s_v = \underline{b} \cdot \hat{s}_v. $$ 
 And, proceeding with substitutions, we get:
 
 \[\overline{\rat} \models \frac{s_q + 0 \cdot (h_p + h_q)}{c_1^{e_1} \cdot \ldots \cdot c_k^{e_k} \cdot r_1^{a_1} \cdot \ldots \cdot r_m^{a_m}} = \frac{s_u + 0 \cdot (h_u+ h_v)}{c_1^{f_1} \cdot \ldots \cdot c_k^{f_k} \cdot r_1^{b_1} \cdot \ldots \cdot r_m^{b_m}}.\]
 It suffices to prove the same equation from $E_{\mathsf{ftc-cm}}$ and to that end 
 we proceed in the following manner. 
 
 First, notice by the usual rules of calculation, available from $E_{\mathsf{ftc-cm}}$, 
 $$\displaystyle \frac{1}{x} =  \frac{1+0}{x} = \frac{1}{x} + \frac{0}{x} = \frac{x + 0 \cdot x}{x \cdot x} =
 \frac{(1 +0) \cdot x}{x \cdot x} = \frac{x}{x \cdot x}.$$ 
 Then, let $K_{max} $ be the maximum of $e_1,\ldots,e_k,f_1,\ldots,f_k,a_1$, $\ldots,a_m,
 b_1,\ldots,b_m$, and let $K = K_{max}+1$. 
 
 Now, we make repeated use the validity of 
 $$ \frac{x + 0 \cdot w}{y \cdot z^g} = \frac{(x \cdot z^h) + 0 \cdot w} { y \cdot z^{g+h}}\quad  (\star)$$ 
for positive integers $g$ and $h$ (in this case  for $g + h = K$) in order to transform the above equation into another, but equivalent, 
 equation between flat fracterms with the same denominator. The identity ($\star$) is a consequence of the validity of the equations
 $\displaystyle \frac{1}{x} = \frac{x}{x \cdot x}$ and $(x + (0 \cdot y)) \cdot z = (x \cdot z) + (0 \cdot y)$). 
 
Let 
$$\displaystyle \hat{t} \equiv  \frac{s_q + 0 \cdot (h_p + h_q)}{c_1^{e_1} \cdot \ldots \cdot c_k^{e_k} \cdot r_1^{a_1} \cdot \ldots \cdot r_m^{a_m}} \ \textrm{and} \  \displaystyle \hat{r} \equiv \frac{s_u +  0 \cdot (h_u+ h_v)}{c_1^{f_1} \cdot \ldots \cdot c_k^{f_k} \cdot r_1^{b_1} \cdot \ldots \cdot r_m^{b_m}}.$$ 
 
\noindent Moreover, let 
 $$\displaystyle \hat{\hat{t}} \equiv  \frac{(s_q   \cdot c_1^{K-e_1} \cdot \ldots \cdot c_k^{K-e_k} \cdot r_1^{K-a_1} \cdot \ldots \cdot r_m^{K-a_m}) + 0 \cdot (h_p + h_q)}{c_1^{K} \cdot \ldots \cdot c_k^{K} \cdot r_1^{K} \cdot \ldots \cdot r_m^{K}}$$
 and 
 $$\displaystyle \hat{\hat{r}} \equiv  \frac{(s_u  \cdot c_1^{K-f_1} \cdot \ldots \cdot c_k^{K-f_k} \cdot r_1^{K-b_1} \cdot \ldots \cdot r_m^{K-b_m}) + 0 \cdot (h_u+ h_v)}{c_1^{K} \cdot \ldots \cdot c_k^{K} \cdot r_1^{K} \cdot \ldots \cdot r_m^{K}}.$$  
\noindent Here it is assumed that the variables in $h_q$ do not occur elsewhere in $\hat{\hat{t}}$ and that the variables of $h_u$ 
 do not occur elsewhere in $\hat{\hat{r}}$; this can be achieved modulo provable equality by means of the equations in $E_{\mathsf{ftc-cm}}$.
 
 With repeated use of the identity ($\star$) we find that 
 $\vdash \hat{t} = \hat{\hat{t}}$ and $\vdash \hat {r} =  \hat{\hat{r}}$.
 
 Summarizing the above, we have established that 
 $$\vdash t = \hat{t} = \hat{\hat{t}}, \ \ \vdash r = \hat {r} = \hat{\hat{r}} \ \textrm{and} \  \
 Enl_\bot(\overline{\rat}) \models \hat{\hat{t}} = \hat{\hat{r}}.$$ 
 Consider the numerators and let 
 $$\displaystyle H_t = s_q  \cdot c_1^{K-e_1} \cdot \ldots \cdot c_k^{K-e_k} \cdot r_1^{K-a_1} \cdot \ldots \cdot r_m^{K-a_m}$$
 and 
 $$\displaystyle H_r = s_u \cdot c_1^{K-f_1} \cdot \ldots \cdot c_k^{K-f_k} \cdot r_1^{K-b_1} \cdot \ldots \cdot r_m^{K-b_m}.$$ 
 so that 
  $$\displaystyle \hat{\hat{t}} \equiv  \frac{H_t + 0 \cdot (h_p + h_q)}{c_1^{K} \cdot \ldots \cdot c_k^{K} \cdot r_1^{K} \cdot \ldots \cdot r_m^{K}}$$
 and 
 $$\displaystyle \hat{\hat{r}} \equiv  \frac{H_r + 0 \cdot (h_u+ h_v)}{c_1^{K} \cdot \ldots \cdot c_k^{K} \cdot r_1^{K} \cdot \ldots \cdot r_m^{K}}.$$

 Then, from $ Enl_\bot(\overline{\rat}) \models \hat{\hat{t}} = \hat{\hat{r}}$, it follows that
 working in $Enl_\bot(\rat)$ for all non-$\bot$ rational substitutions $\sigma$, if $ Enl_\bot(\overline{\rat}),\sigma \models 
 c_1^{K} \cdot \ldots \cdot c_k^K \cdot r_1^{K} \cdot \ldots \cdot r_m^{K} \neq 0$ it must be the case that $ Enl_\bot(\overline{\rat}),\sigma \models H_t= H_r$, so that 
 $ Enl_\bot(\overline{\rat}),\sigma \models 
 c_1^{K} \cdot \ldots \cdot c_k^K \cdot r_1^{K} \cdot \ldots \cdot r_m^{K} \cdot (H_t-H_r )= 0$.
Thus, for all non-$\bot$ valuations $\sigma$, 
$$ Enl_\bot(\overline{\rat}), \sigma \models (c_1^{K} \cdot \ldots \cdot c_k^{K} \cdot r_1^{K} \cdot \ldots \cdot r_m^{K}) \cdot (H_t-H_r )= 0.$$ Rings of polynomials over $\rat$ have no zero divisors 
 and the polynomial sumterm $c_1^{K} \cdot \ldots \cdot c_k^{K} \cdot r_1^{K} \cdot \ldots \cdot r_m^{K}$ is non-zero.
Thus, it follows that, $H_t-H_r = 0$ as polynomials so that  $\vdash H_t = H_r$. 
  
Finally, we complete the proof of case (vi), and thereby the proof of the theorem,  which requires to establish $\vdash   \hat{\hat{t}} =  \hat{\hat{r}}$, by noticing that 
$$\vdash H_t + 0 \cdot (h_p + h_q) = H_r +0 \cdot (h_u+ h_v)$$ 
because otherwise both terms contain different variables which cannot be the case. 

To see this latter point, notice that if, say $x$ occurs in $H_t + 0 \cdot (h_p + h_q)$ and not in $H_r +0 \cdot (h_u+ h_v)$, then, because $H_t = H_r$, a contradiction with $Enl_\bot(\overline{\rat}) \models \hat{\hat{t}} = \hat{\hat{r}}$ arises: contemplate any valuation $\sigma$ that satisfies 
  $Enl_\bot(\overline{\rat}) \models c_1^{K} \cdot \ldots \cdot c_k^K \cdot r_1^{K} \cdot \ldots \cdot r_m^{K-b_m}-1 = 0$, a requirement which is independent of $x$. Indeed, now the RHS depends on $x$ while the LHS does not, which is a contradiction, thereby completing the proof of case (vi) and the theorem.
 
\end{proof} 

\begin{theorem}
The equational theory of common meadows is decidable.
\end{theorem}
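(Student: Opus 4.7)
My plan is to extract a decision procedure directly from the proof of Theorem~\ref{MainThm}: every syntactic manipulation used there is computable, and the final semantic obligation reduces to a decidable question in classical polynomial arithmetic over $\rat$.

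Given an input equation $t = r$ over $\Sigma_{cm}$, the first phase is to reduce it to a canonical form. I would apply fracterm flattening (Theorem~\ref{FF}), which is computable, to obtain flat fracterms $p/q$ and $u/v$ with $p,q,u,v$ terms over $\Sigma_r$ and provably equal to $t$ and $r$ respectively from $E_{\mathsf{ftc-cm}}$. Next, by Proposition~\ref{quasi-polynomial_reduction}, each of $p,q,u,v$ is effectively rewritten either to $\bot$ or to a quasi-polynomial sumterm $s + 0 \cdot h$. These normalisations reduce the input to exactly the shape analysed in the completeness proof.

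The second phase mechanises the case analysis of Theorem~\ref{MainThm}. Via the translation $s \mapsto s'$ I would view the polynomial parts $s_p,s_q,s_u,s_v$ as standard polynomials in $\rat[X_1,\ldots,X_n]$. The equation is valid in $\mathsf{CM}$ if and only if: (i) either $s_q$ and $s_v$ are both identically zero, or (ii) they are both non-zero, the primitive parts $\hat{s}_q,\hat{s}_v$ (extracted by taking gcds of rational coefficients) factorise into the same multiset of irreducible primitive polynomials in $\Int[X_1,\ldots,X_n]$ up to sign, the integer factors $a$ and $b$ share the same set of prime divisors, and, after the explicit clearing to a common denominator exhibited in the proof, the resulting polynomial identity between the numerators holds in $\rat[X_1,\ldots,X_n]$ and the sets of ghost variables occurring in the $0 \cdot (\cdot)$ parts agree. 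Each of these is a classical decidable task: integer gcd and factorisation, multivariate polynomial factorisation over $\rat$ (e.g., Kronecker's method), and polynomial identity testing over $\rat$.

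The main obstacle is not conceptual but bookkeeping: one has to verify that each ingredient invoked in the proof of Theorem~\ref{MainThm} -- in particular the choice of primes $c_1,\ldots,c_k$, the exponents $K$, and the polynomial factorisations via Proposition~\ref{factorisation} -- can be produced effectively from the input, and then check that the overall equivalence used in the proof is an `iff' rather than merely a sufficient condition (soundness of $E_{\mathsf{ftc-cm}}$ gives one direction, and the completeness argument, run algorithmically, gives the other). A lighter alternative is to invoke Theorem~\ref{MainThm} to observe that $FC(\mathsf{CM})$ is r.e. (as the theorems of a finite equational system) and then note that the procedure above also effectively enumerates non-theorems, so $FC(\mathsf{CM})$ is co-r.e. as well; but the explicit algorithm sketched above is more informative and yields decidability directly.
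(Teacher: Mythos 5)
Your proposal is correct in spirit but takes a genuinely different route from the paper. The paper's own proof is the ``crude'' recursion-theoretic one: the theorems of the finite system $E_{\mathsf{ftc-cm}}$ are computably enumerable, and by the completeness theorem an invalid equation must already fail in $Enl_\bot(\overline{\rat})$ or $Enl_\bot(\overline{F_p})$ for some prime $p$; since these algebraically closed fields are uniformly computable (citing Fr\"ohlich--Shepherdson and Stoltenberg-Hansen--Tucker), one can enumerate candidate counterexample valuations, so the non-theorems are also computably enumerable, and r.e.\ plus co-r.e.\ gives decidability. Your approach is instead the explicit decision procedure extracted from the normal forms of the completeness proof, which the paper only gestures at in the remark following its proof (``an alternate proof of decidability can be constructed that offers an algorithm''). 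What your route buys is an actual algorithm and, implicitly, complexity information; what it costs is exactly the bookkeeping you acknowledge: you must verify effectivity of flattening, quasi-polynomial reduction, coefficient gcds, multivariate factorisation over $\Int[X_1,\ldots,X_n]$, and the variable-occurrence checks, and then prove that your list of checkable conditions is equivalent to validity (necessity from the completeness argument, sufficiency from re-running the proof's derivation plus soundness). The paper's argument avoids all of this at the price of giving no usable algorithm. Note also that your ``lighter alternative'' in the last paragraph is not quite the paper's argument: the paper gets co-r.e.-ness by searching for semantic counterexamples in computable models, not by running your syntactic procedure.

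One concrete slip: you require that the primitive parts $\hat{s}_q$ and $\hat{s}_v$ ``factorise into the same multiset of irreducible primitive polynomials up to sign.'' That is too strong. What the completeness proof (via Proposition~\ref{factorisation} and Proposition~\ref{application}) actually requires is that they have the same \emph{set} of irreducible factors, with possibly different multiplicities $a_1,\ldots,a_m$ and $b_1,\ldots,b_m$; the discrepancy in exponents is then absorbed by the common-denominator manipulation with the exponent $K$. As stated, your criterion would reject valid equations such as $\frac{1}{x} = \frac{x}{x \cdot x}$, whose flattened denominators are $x$ and $x^2$. The same relaxation applies to the integer parts: $a$ and $b$ need only share the same prime divisors, which you do state correctly. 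With ``multiset'' replaced by ``set'' the plan goes through.
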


\begin{proof}
Given an equation $e$, if it is true in all common meadows then it is provable from $E_{\mathsf{ftc-cm}}$. The equations provable from this finite set $E_{\mathsf{ftc-cm}}$ are computably enumerable (Lemma \ref{enumerability}). Thus, the true equations of the equational theory of common meadows are computably enumerable.
If $e$ is not true in all common meadows then $e$ fails in an algebraic closure of some prime field $\overline{\rat}$ or $\overline{F_p}$ for some prime $p$. These fields are computable and can be  computably enumerated uniformly \cite{StoltenbergTucker1999,ShepherdsonF1956}, and a computable search for a counterexample to $e$ attempted. Thus, the false equations of the equational theory of common meadows are computably enumerable.
In consequence, the equational theory of common meadows is decidable.

\end{proof}

Of course, this enumeration argument for decidability is crude. However, we note that the completeness proof for Theorem \ref{MainThm} is effective because the transformations which are used are all computable -- including the earlier necessary lemmas such as flattening (Theorem \ref{FF}) and reductions to quasi-polynomials (Proposition  \ref{quasi-polynomial_reduction}). From these transformations, which map the provability of equations to the identity of terms, an alternate proof of decidability can be constructed that offers an algorithm for the provability and validity of equations and invites a further independent analysis.


\section{Reflections and prospects}\label{concluding_remarks}

To better appreciate the results of this paper, it maybe helpful to discuss these of topics in detail: the nature of soundness and completeness theorems (\ref{completeness_theorems}); the special role of equations (\ref{equations_and_data_types});  the scope of applications of the theory (\ref{applications}); the origins and development of our research programme to which studies of common meadows belong (\ref{background}), and its aims and motivation (\ref{goals}); and some further technical matters to do with the results (\ref{matters_arising}).


\subsection{Axioms, calculi and their soundness and completeness}\label{completeness_theorems}

One cannot reason without some initial assumptions, and what can be deduced from them by logical reasoning will be statements that are true of all contexts where those assumptions apply. In a logical calculus $L$, mathematical or computational, this means that the theorems that are formally expressed in the language of $L$, and deduced from axioms using its rules, hold true of \textit{all} models of the axioms.  This property is called \textit{soundness} and must be proved for each logical calculus $L$ and its chosen semantics $S$. Conversely, there is the property of \textit{completeness} when any formal statement in the language of $L$ that is true of the semantics $S$ can be proved using the rules of $L$. The soundness and completeness of first order logic is the classical example: for any first order theory $A$ and first order logic $L$, (i) a first order formula $\phi$ derived from $A$ by $L$ is true of \textit{all} models of $A$; and (ii) if $\phi$ is true of \textit{all} models $A$ then $\phi$ is provable in $L$.

However, it is commonly the case that a set $A$ of axioms has been designed to capture the essential properties of a particular class $M$ of models. This is the case when using axioms to understand number systems, whether with philosophical, mathematical or computational motivations.  Any sound calculus that reasons with the axioms in $A$ is \textit{not} talking about the desired target class $M$ of models \textit{only}, but actually about \textit{all} possible models of $A$. If just one model of $A$ falsifies a statement $\phi$ then that statement cannot be proved from $A$. Since the L\"{o}wenheim-Skolem Theorem (c.1920), it has been known that first order axiomatisations cannot determine the cardinality of their models. G\"{o}del's Incompleteness Theorem on first order reasoning about natural number arithmetics with the Peano axioms is the primary classic example of this situation.

Given a set $A$ of axioms, a formal language and proof rules for a logic $L$, the scope and limits of reasoning are \textit{defined exactly} by soundness and completeness: soundness being necessary to be worth getting started, and completeness being a difficult technical problem if one is interested in a particular subclass of models of the axioms.

\subsection{Equations, data types and arithmetic}\label{equations_and_data_types}

In working with data types, it is a common task to seek axioms to analyse the essential properties of the operators of an interface to a  \textit{particular} class of semantic models; indeed, the class is often narrowly focussed, being all isomorphic copies of a particular data type that is computable, as is the case with number systems such as the rationals. 

Equations are used for axioms as a means of specification, reasoning and computation because they are well understood theoretically and practically. They are  (i) familiar and user friendly logical formulae; and they have (ii) many general mathematical results that are applicable to computing problems; (iii) practical heuristics and working software tools, based on term rewriting \cite{BaaderNipkow1998,Terese2003} and (iv) widely available in existing verification tools.

Notable in the case of (ii), is the fact that equations have initial algebra semantics that allow the specification of data types uniquely up to isomorphism. It is 50 years since these equational methods were first applied and developed in computer science \cite{GoguenThatcherWagner1976,MeseguerGoguen1985,Goguen1989,Tucker2022}.  In the case of (iii),  there is a wealth of specification and reasoning tools optimised for equations, such as the mature and widely admired Maude \cite{Mesegueretal2002,Mesegueretal2007,Maude2023}, and its predecessors and successors. In the case of (iv) equational reasoning is possible in most theorem provers that process first order logics.

The class of commutative rings with 1 are defined by finitely many equations (Table \ref{commutative_ring}). However, the class of all fields is first order requiring the use of negation.\footnote{Technically, an axiom that is a $\Pi_{1}^{0}$ or $\forall$-formula, being a first order formula requiring no more than universal quantification.} Furthermore, as we noted earlier, in the origins of the algebraic methods for data types, there is \cite{Birkhoff1935} of 1935: 

\noindent {\bf Birkhoff's Theorem}. \emph{Let $\mathsf{K}$ be a class of $\Sigma$-algebras. Then $\mathsf{K}$ has an equational axiomatisation $E$ if, and only if, the class $\mathsf{K}$ is closed under subalgebras, homomorphic images and products.}

From this it follows that the class of all fields, and the class of all common meadows, cannot be defined by equations, as these classes are not closed under products.

There are different semantic models of the data type/meadow  of rational numbers $\rat$ that we will note in section \ref{background}, all of which have been given equational specifications under initial algebra semantics. As we have seen in subsection \ref{completeness_theorems}, the equational calculi that are used to reason with the equational specifications are talking about far more than the rationals. In particular, in the case of common meadows, our completeness theorem here answers the question as to what the axioms of $E_{\mathsf{ftc-cm}}$ can actually talk about using the language of equations. This confirms the significance of $E_{\mathsf{ftc-cm}}$ and any equivalent set of axioms. 

So, in proving mathematical properties -- say, of logics designed to prove that programs meet specifications -- axioms and calculi are necessary. Issues of partiality versus totality for operations arise and must be addressed for both computational and logical reasons.  Partiality can play a valuable role in specifications. There are different interpretations of partiality in addition to the orthodox `no element' in a specification: partiality can simply stand for some element yet to be defined, possibly one that is quite arbitrary. For example, partiality is an important semantic feature of the algebraic specification language CASL \cite{BidoitMosses2004}. Terms containing partial operators lead to various semantic options for asserting the equality of two terms; in CASL, for example, these ramifications mount up.

In formal reasoning, to make logics that have tractable and robust methods, recognising controlling terms that may have no meaning is necessary.
In a program to be reasoned about, an operator that does not return a value is a logical complication with the risk of a failed computation for the user, with or without error messages etc.  To deal with this uncertainty, data types with partial operations, when they are to be implemented, can be (i) guarded against inputs that fail to return a value, or (ii) be made to return a special value, with in either case (iii) a message to the user of some kind. 

Thus, in the case of (ii) where data types involve numbers, we have studied how division can be totalised by some semantic choice for $\frac{x}{0}$. This has long been done in calculators, languages and theorem proving. Our many results on this question point to $\bot$ and common meadows as the best choice for totalisation as we will explain in section \ref{background}.


\subsection{Applications of the theory}\label{applications}

The scope for applications of a theory of common meadows in computing is potentially very wide: wherever number systems are to be found in models such as in

1. numerical computation: real and rational number systems;

2. probabilistic computation: real and rational number systems;

3. security: finite fields in coding;

4. quantum computing: complex number systems;

5. geometric and visual computing: real number systems.

\noindent Since division is everywhere in these areas, meadows are everywhere, and so there are opportunities to use common meadows and, in particular, the axioms $E_{\mathsf{ftc-cm}}$ in the semantic design of computing systems in these areas.

Whilst the floating point numbers are a long established standard model for real number computation, they act as a specification for hardware rather than for programming languages, which enjoy many options for the semantics of their constructs.
Common meadows offer a semantic tool for modelling programming constructs. Simply, the value $\bot$ can be, or signal, an error message, or used in a mechanism to raise an exception handler. As explained in section \ref{completeness_theorems}, the completeness theorem confirms the significance of the equational axiomatisation $E_{\mathsf{ftc-cm}}$ for the specification and verification of programs that use $\bot$ in some way. 

Consider reasoning about arithmetical programs involving divisions. An obvious question that arises is: are expressions in a program ever undefined because of division? This can be formulated using terms over the common meadow signature as questions of the form: $t = \bot$? Such equalities are amenable to any number of theorem provers with first order logic (e.g., such as Coq and Lean etc.).  Using a verification tool, if $E_{\mathsf{ftc-cm}} \vdash t = \bot$ then one might seek extra assumptions $E$  -- such as to act as guards designed to filter inputs -- to prove
$$E_{\mathsf{ftc-cm}} + E \vdash t \neq \bot.$$

Generally, in the case of imperative programming, commonly equations appear in pre- and post-conditions, and invariants when reasoning about programs with, say, Hoare logics. Verified properties of the data types play an essential role in using Hoare logic through the intermediate assertions needed in applications of the Rule of Consequence.
In fact, imperative programs based on equations can be proved to be correct in equational customisations of Hoare Logic \cite{BergstraT1981}. Thus, again, where programs involve common meadows there is a role for consulting $E_{\mathsf{ftc-cm}}$ for information.

Pre- and post-conditions, invariants and contracts can be explicit components in the programs of some languages, such as \textit{Dafny} and its associated tools \textit{Boogie} derived from Hoare Logic  \cite{Dafny2024}; Dafny is an open source project begun at Microsoft by Rustan Leino \cite{RustanLeino2010}. It and similar logically annotated earlier languages, such as Bernard Meyer's \textit{Eiffel} and the ADA derived \textit{Spark}, can be appropriately termed \textit{verification-aware languages}.

The scope of these results is further shaped by the scope of equations in computing.\footnote{Equations became well established in origins of computability theory and drove the development of recursive methods of declarative programming.}


\subsection{Semantical options for the problem of division by zero}\label{background}

Completely central to quantification and computation are the rational numbers $\rat$. When we measure the world using a system of units and subunits then we use the rational numbers. Today's computers calculate only within subsets of the rational numbers. An early motivation for our theory is to design and analyse abstract data types of rational numbers.  Designing a data type for rationals requires algebraic minimality (Definition \ref{adt}), which can be obtained by introducing either division or inverse as an operation. Thus, division is essential for data type of rational numbers and must be total, which requires choosing a value for $1/0$. 

Now, working with totalised forms of division is nothing new in computing. Using various semantical values to be found in practical computations to totalise division -- such as $\mathsf{error}$, $\infty$, $NaN$, the last standing for `not a number' --  we have constructed equational specifications (under initial algebra semantics) for the following data types of rational numbers:

\textit{Involutive meadows}, where an element of the meadow's domain is used for totalisation, in particular $1/0 = 0$, \cite{BergstraT2007}.

\textit{Common meadows}, the subject of this paper, where a new external element $\bot$ that is `absorbtive' is used for totalisation $1/0 = \bot$, \cite{BergstraP2015}. 

\textit{Wheels}, where one external $\infty$ is used for totalisation $1/0 = \infty = -1/0$, together with an additional external error element  $\bot$ to help control the side effects of infinity, \cite{Setzer1997,Carlstroem2004,BergstraT2021a}. 

\textit{Transrationals}, where besides the error element $\bot$ two external signed infinities are added, one positive and one negative, so that
division is totalised by setting $1/0 = \infty$ and $-1/0 = -\infty$, \cite{AndersonVA2007,dosReisGA2016,BergstraT2020}. 

In practice, the first three of these models are based on data type conventions to be found in theorem provers, common calculators, exact numerical computation, respectively. The transrationals provide a conceptual model of how division by zero is handled in floating point arithmetic. 
A fifth, the symmetric transrationals, that we developed is discussed in the section \ref{goals}.\footnote{For some remarks on division by zero, we mention~\cite{AndersonB2021}, and for a survey~\cite{Bergstra2019b}.}

Let us compare the common meadows with at least one of the above semantical options. The simplest and most common choice appears to be the involutive meadows with 1/0 = 0, which is a semantics deployed in logical arguments and proof checking because it helps to keep simple the type structures of the logics used for the proof checkers.\footnote{It also has its mathematical advocates~\cite{OkumuraSM2017,Okumura2018}.}

In  our~\cite{BergstraT2007}, to create an equational specification for the rational numbers, we introduced totality by setting $0^{-1}=0$. This led us to the study of involutive meadows ~\cite{BergstraT2007,BergstraHT2009,BergstraM2015}, and subsequently to the broad programme of work mentioned earlier. 

An explicit logical discussion of the proposal to adopt $0^{-1} = 0$ dates back at least to Suppes~\cite{Suppes1957}, and to theoretical work of Ono~\cite{Ono1983}. A completeness result was shown by Ono~\cite{Ono1983}. In~\cite{BergstraHT2009}, the equational theory of involutive meadows was introduced. Completeness for the Suppes-Ono equational theory is shown with a different proof in~\cite{BergstraBP2013}. An advantage of the latter approach to completeness is that it generalises to the case of ordered meadows, see also~\cite{BergstraBP2015}.

Although the flattening property is quite familiar from the school algebra of rational numbers, it validity for common meadows (Theorem \ref{FF}) stands in marked contrast with the abstract situation for involutive meadows. 
In~\cite{BergstraBP2013} it is shown that, with the axioms for involutive meadows, terms are provably equal to only \textit{finite sums} of flat fracterms; and  in~\cite{BergstraM2016a}, it is shown that \textit{arbitrarily large numbers of summands of flat fracterms} are needed for that purpose. Thus, the involutive meadows run into fundamental algebraic difficulties that the common meadows do not; the corresponding equational calculus is made quite complex by the absence of fracterm flattening. Flattening also fails for $1/0 = +\infty$ in the transrationals. 

To sum up this brief comparison, the main use in computing for common meadows is to provide a method for totalising division which creates a manageable equational theory with very desirable properties, among which are:
 
(i) an easily understood axiomatisation, intimately and agreeably connected with the familiar classical theories of rings and fields;

(ii) a finite equational axiomatisation that yields an equational calculus;

(iv) an equational axiomatisation that uniquely defines the rational numbers under initial algebra semantics;

(v) a fracterm flattening theorem;

(vi) a meaningful completeness theorem for the semantic models and their axiomatisation.

\noindent Our results here and elsewhere point to the fact that arithmetical abstract data types with error values are theoretically superior among the several practical conventions we have studied. We hope that common meadows can play a role in the design of computational systems in the same way as data types with $1/0 = 0$ or with $1/0 = \infty$ are playing already.


\subsection{Aims of the programme}\label{goals}

This division by zero problem proved to be the beginning of a 15+ year programme of  research into algebraic and logical aspects of computer arithmetics. Clearly, as discussed in the previous section, one of the aims of our research programme is to discover the mathematical implications of some of the different options of totalising division in number systems, and to establish an understanding of what might be best practice.  
These results on specifications have a role in designing new, more mathematically and logically useful, specifications of computer arithmetics that may better serve the needs of programming in due course. 

Floating point systems exhibit a number of pathologies \cite{Kahan2011} and are very complicated to analyse and reason about \cite{Overton2001,Rump2010}.  Thus, over decades there has been renewed interest in creating computer arithmetics distinct from floating point. Alternate models have been designed for `exact computations' with real numbers, firmly focused on working implementations rather than on their algebra, specification and reasoning.  An important early example is Interval Analysis that works with intervals rather than points in order to accommodate errors in measurements or due to rounding \cite{Moore1966}; for an introduction to Interval Analysis, see \cite{Tucker2011}.
In the 1990s, there was a resurgence of interest in computable analysis and topology, which also led to quite distinct semantical models and implementations, having distinct goals and aspirations. For the purpose of computability theory, the different models were equivalent -- see the partial survey \cite{Stoltenberg-HansenTucker1999b}. 

The common meadow has led us to re-examine, from our algebraic point of view, further general properties that are seen in computer arithmetics, old and new. Building on the common meadow, we have designed and specified a new data type of rational numbers in \cite{BergstraT2022c}, motivated by some common floating point conventions. Called the \textit{symmetric transrationals}, the data type employs the error element $\bot$, two external signed infinities $+\infty, -\infty$, and two infinitesimals $+\iota, -\iota$ so that
division is totalised by setting $1/0 = \bot$, as with common meadows; and the other elements are used to manage overflows and underflows. 
The symmetric transrationals implement these three features:
 
(i) having total operations only; 

(ii) accommodating overflows and underflows with respect to upper and lower numerical bounds; and 

(iii) computations are sensitive when values come close to 0. 

\noindent In particular, it separates totality from over- and underflows. 

Our programme continues with other topics being addressed, including: (i) partiality in abstract data types and term rewriting; (ii) the effect of the different semantics for division on the computational power of imperative programs on arithmetic structures; (iii) the interpretation of fractions and their pragmatics in teaching; and (iii) the connection between various meadows and advanced ring theory (such as von Neumann rings). In fact, our main theorem here has been applied in the proof of a theorem on term rewriting in \cite{BergstraTucker2023}. 


\subsection{Matters arising}\label{matters_arising}

Being close to the axioms for commutative rings, the axioms $E_{\mathsf{ftc-cm}}$ are not unfamiliar and hopefully memorable.
The equational axiomatisation $E_{\mathsf{ftc-cm}}$ has been optimised for ease of use in the paper (e.g., especially flattening), and we have not been particularly concerned about logical independence of the various axioms.  Given their arithmetic purpose, the relationships between axiomatisations of common meadows and axiomatisations of rings and fields are of mathematical interest and practical value. Finding attractive sets of axioms which are equivalent and are
also minimal is a topic worthy of investigation in its own right. In the revision of ~\cite{BergstraP2015} the same equational theory, though equipped with inverse rather than with division, is given an axiomatisation with logically independent axioms. Of course, developing a long list of consequences of the axioms in $E_{\mathsf{ftc-cm}}$ is useful as lemmata for reasoning.

Three open questions stand out from the results in this paper: 

(i) Is the equational theory of the common meadow $Enl_{\bot}(\rat(\_/\_))$ of rational numbers decidable?

(ii) Can a finite basis for the equational theory of common meadows with orderings 
be found? 

(iii) Can the equational theory of common meadows be axiomatised by means of a specification which constitutes a complete term rewriting system?

In the matter of (i), we know two things. First, that our $E_{\mathsf{ftc-cm}}$ is \textit{not} complete for the equational theory of $Enl_{\bot}(\rat(\_/\_))$. For instance, the equation $(X^2+1)/(X^2 + 1) = 1$ is true in $Enl_{\bot}(\rat(\_/\_))$ but it is false in a meadow of complex numbers $Enl_{\bot}(\mathbb{C}(\_/\_))$, where $x = i$ is possible; thus, it cannot be derivable from $E_{\mathsf{ftc-cm}}$ which is sound for all common meadows, including the common meadow obtained from complex numbers. Second, we have shown that the equational theory of the common meadow of rational numbers has the 1-1 degree of the Diophantine Problem for rational numbers, which is an important long-standing open problem, see \cite{BergstraT2021b}.

In the matter of (ii), this was done in the setting of $1/0 = 0$ using a sign function rather than an order relation in~\cite{BergstraBP2013}. 

In the matter of (iii), a negative result in a simplified case was obtained in~\cite{BergstraP2016}.

Notwithstanding these open questions, we consider common meadows to provide an attractive starting point for the algebraic and logical study of a practical programming semantics for reasoning about arithmetical data types. 


\addcontentsline{toc}{section}{References}

\end{document}